\newtheorem{theorem}{Theorem}
\newtheorem{proof}{Proof}
\newtheorem{remark}{Remark}
\begin{document}
\begin{frontmatter}

\title{Feedback Control of Dynamical Systems with Given Restrictions on Output Signal\thanksref{footnoteinfo}} 

\thanks[footnoteinfo]{The results of Section 3 were developed under support of RSF (grant 18-79-10104) in IPME RAS. The other researches were partially supported by grants of Russian Foundation for Basic Research No. 19-08-00246 and Government of Russian Federation, Grant 074-U01.}

\author[First,Second]{Igor Furtat} 

\address[First]{Institute for Problems of Mechanical Engineering Russian Academy of Sciences, 61 Bolshoy ave V.O., St.-Petersburg, 199178, Russia, (e-mail: cainenash@mail.ru).}
\address[Second]{ITMO University, 49 Kronverkskiy ave, Saint Petersburg, 197101, Russia}

\begin{abstract}                
A novel method for control of dynamical systems, proposed in the paper, ensures an output signal belonging to the given set at any time. 
The method is based on a special change of coordinates such that the initial problem with given restrictions on an output variable can be performed as the problem of the input-to-state stability analysis of a new extended system without restrictions.
The new control laws for linear plants, systems with sector nonlinearity and systems with an arbitrary relative degree are proposed. 
Examples of change of coordinates are given, and they are utilized to design the control algorithms. The simulations confirm theoretical results and illustrate the effectiveness of the proposed method in the presence of parametric uncertainty and external disturbances.
\end{abstract}

\begin{keyword}
Dynamical system, nonlinear transformation, input-to-state stability, control.
\end{keyword}

\end{frontmatter}

\section{Introduction}

The control with guaranteeing the desired quality of transients in an output signal is an important problem of the theory and practice of automatic control. 
If plant parameters are known, there are numbers of classical methods are used: control methods with the placement of eigenvalues, control with the frequency response analysis, optimal control methods, etc., see, e.g. \cite{Kuo75,Golnaraghi17}. 
The problem of improving the upper bound of deviation of the output signal in linear systems with nonzero initial conditions is still relevant (\cite{Whidborne11,Polyak15}).
 
The methods of adaptive and robust control are effective under parametric uncertainty and disturbances, see, for example, \cite{Ioannou95,Fradkov00,Tao03}. 
The transient quality is specified by a reference model. 
However, the methods \cite{Ioannou95,Fradkov00,Tao03} do not guarantee a given deviation of the output signal from the reference signal in transient mode. 
If the plant initial conditions are unknown, then at the initial time these deviations can be sufficiently large. 
The methods \cite{Ioannou95,Fradkov00,Tao03} guarantee only the prespecified deviation of the output signal from the reference signal in the steady state. 
However, the estimation of prespecified deviation can be sufficiently rough.

The method \cite{Polyak14} ensures that output signals belong to the smallest ellipsoid in transition and steady state. 
However, this ellipsoid remains the same at any time, therefore, the method can give rough quality in transition and steady state.

The paper \cite{Miller91} proposes the adaptive control method which ensures belonging of output signal to given sets. 
These sets may be different for transient and steady state modes. 
The sets are performed by a sequence of rectangles. 
The height of each rectangle corresponds to the desired maximum deviation of the output variable from the equilibrium position. 
The length of the rectangle corresponds to the desired time when the output variable belongs to the corresponding rectangle. 
However, the rectangular areas in \cite{Miller91} are rather rough and the algorithm is applicable only for plants with scalar input and output signals.

Differently from \cite{Miller91}, in the paper \cite{Bechlioulis08} a control method with the guarantee of belonging the output signal to a given set for plants with vector input and vector output is proposed. However, the implementation of this method requires knowledge of the sign and knowledge of the set of initial conditions. 
Moreover, obtained upper and lower bounds for transients are rather rough because these bounds are determined by the same function with different signs. 
Additionally, the upper and lower bounds asymptotically converge to some constants.

In the present paper, we propose a new control method with providing an output signal to a given set. 
Differently from \cite{Bechlioulis08}, the given set can be described by functions that independent on the sign of plant initial conditions. 
Only knowledge of the set of initial values is required. 
Also, unlike \cite{Miller91,Bechlioulis08}, the configuration of the given set can be described by arbitrary continuously differentiable functions for which asymptotic convergence is not required. 
As a result, the obtained method significantly expands the class of tasks compared with \cite{Miller91,Bechlioulis08}.

The paper is organized as follows. In Section \ref{Sec2} the control problem is formulated. 
Section \ref{Sec3} describes the main result, where a special change of coordinate is proposed. As a result, the initial problem with restrictions can be performed as the problem of the input-to-state stability analysis of a new extended dynamical system without restrictions. 
Also in Section \ref{Sec3} examples of coordinate change are given. 
Section \ref{Sec4} proposes a state feedback control algorithm for linear plants with known parameters and unknown external bounded disturbances. 
Section \ref{Sec5} considers a synthesis of the output feedback control law for systems with sector nonlinearity. The proposed control law does not depend on the plant parameters. 
In Section \ref{Sec6} the new output feedback control law is designed for systems with an arbitrary relative degree. 
Also, in Sections \ref{Sec4}-\ref{Sec6} the simulations illustrate confirmation of theoretical results and show the effectiveness of the proposed method in the presence of parametric uncertainty and external disturbances.

\textit{Notations}. Throughout the paper the superscript $\rm T$ stands for matrix transposition; 
$\mathbb R^{n}$ denotes the $n$ dimensional Euclidean space with vector norm $|\cdot|$; 
$\mathbb R^{n \times m}$ is the set of all $n \times m$ real matrices; 
$I$ is the identity matrix of corresponding order;
$A^*$ is the adjugate of the matrix $A$.

\section{Problem formulation} \label{Sec2}

Consider a dynamical system in the form
\begin{equation}
\label{eq2_1}
\begin{array}{l} 
\dot{x}=F(x,u,t),
\\
y=h(x),
\end{array}
\end{equation} 
where $t \geq 0$, $x \in \mathbb R^n$ is the state vector, $u \in \mathbb R^m$ is the control signal, $y=col\{y_1,...,y_v\}$ is the output signal. The vector function $F$ is defined for all $x$, $u$, $t$ and it is a piecewise continuous and bounded function in $t$. The function $h(x)$ is continuously differentiable w.r.t. $x$. Plant \eqref{eq2_1} is controllable and observable for all $x \in \mathbb R^n$.

Our objective is to design a control law that ensures the input-to-state stability (ISS) of the closed-loop system and the signal $y(t)$ belongs to the following set
\begin{equation}
\label{eq2_20}
\begin{array}{l} 
\mathcal{Y}=\left\{y \in \mathbb R^v:~ \underline{g}_i(t) < y_i(t)  < \overline{g}_i(t),~i=1,...,v\right\}
\end{array}
\end{equation} 
for all $t \geq 0$. Here $\underline {g}_i(t)$ and $\overline{g}_i(t)$ are bounded functions with their first time derivatives. These functions are chosen by the designer. For example, in control of multi-machine power systems \cite{Pavlov01}, it is required to ensure the conditions: $\underline{w} < w(t) < \overline{w}$ and $\underline {V}<V(t)<\overline {V}$ for all $t \geq 0$, where $w(t)$ is the frequency and  $V(t)$ is the output voltage.

Differently from \cite{Bechlioulis08}, goal \eqref{eq2_20} is independed on the sign of plant initial conditions. Also, unlike \cite{Miller91,Bechlioulis08}, the set $\mathcal{Y}$ in \eqref{eq2_20} can be described by arbitrary continuously differentiable functions for which asymptotic convergence is not required.

\section{Main result}
\label{Sec3}

Let us consider a change of the output variable $y(t)$ in the form
\begin{equation}
\label{eq2_2}
\begin{array}{l} 
y(t)=\Phi(\varepsilon(t),t),
\end{array}
\end{equation}
where $\varepsilon(t) \in \mathbb R^v$ is the continuously differentiable vector function w.r.t. $t$, 
the function 
$\Phi(\varepsilon,t)=col\{\Phi_1(\varepsilon,t),...,\\ \Phi_v(\varepsilon,t)\}$ satisfies the following conditions:

\begin{enumerate}
\item [(a)] $\underline{g}_i(t)<\Phi_i(\varepsilon,t)<\overline{g}_i(t)$, $i=1,...,v$ for all $t \geq 0$ and $\varepsilon \in \mathbb R^v$;
\item [(b)]
there exists the inverse function $\varepsilon=\Phi^{-1}(y,t)$ for all $y \in \mathcal{Y}$ and $t \geq 0$;
\item [(c)] the function $\Phi(\varepsilon,t)$ is continuously differentiable in 
$\varepsilon$ and $t$ as well as $\det\left(\frac{\partial \Phi(\varepsilon,t)}{\partial \varepsilon}\right) \neq 0$ for all $t \geq 0$ and $\varepsilon \in \mathbb R^v$;
\item [(d)] the function $\frac{\partial \Phi(\varepsilon,t)}{\partial t}$ is bounded on $t \geq 0$ for all $\varepsilon \in \mathbb R^v$.
\end{enumerate}

Consider several examples of the function $\Phi(\varepsilon,t)$.

\textit{Example 1.}
Let $\Phi (\varepsilon,t)=g(t)S(\varepsilon)$, where the function $S(\varepsilon) \in \mathbb R$ defines a coordinate change and the function $g(t) \in \mathbb R$ describes the boundary of a given restrictions. 
Additionally, $g(t) \neq 0$ and $\dot{g}(t)$ are bounded functions, 
$S(\varepsilon)=\frac{\varepsilon}{|\varepsilon|+1}+r$, $r \in \mathbb R$. 
See example of the function $S(\varepsilon)$ in Fig.~\ref{Fig000} for $r=1$. 
Since  $y(t)=g(t)S(\varepsilon)$ from \eqref{eq2_2}, then $(r-1)g(t)<y(t)<(r+1)g(t)$ for $g(t)>0$ and $(r+1)g(t)<y(t)<(r-1)g(t)$ for $g(t)<0$. 
The inverse function takes the form $\varepsilon=\frac{y-rg}{g-(y-rg) sign(\varepsilon)}$.

\textit{Example 2.}
In Example 1 introduce the function $S(\varepsilon)$ in the form $S(\varepsilon)=\frac{\overline{r}e^{\varepsilon}+\underline{r}}{e^{\varepsilon}+1}$, where 
$0<\underline{r}<\overline{r}$. See example of the function $S(\varepsilon)$ in Fig.~\ref{Fig000}for $\overline{r}=1.5$ and $\underline{r}=0.5$.
Then the inverse function $\varepsilon=\ln\frac{\underline{r}gy}{y-\overline{r}g}$ is valid for $\underline{r}g(t)<y(t)<\overline{r}g(t)$ and $g(t)>0$ or for $\overline{r}g(t)<y(t)<\underline{r}g(t)$ and $g(t)<0$.

In examples 1 and 2 the upper and lower boundaries of the given restrictions depend on the function $g(t)$. The following two examples contain the change of variable with independent functions of the given restrictions.

\begin{figure}[h!]
\center{\includegraphics[width=1\linewidth]{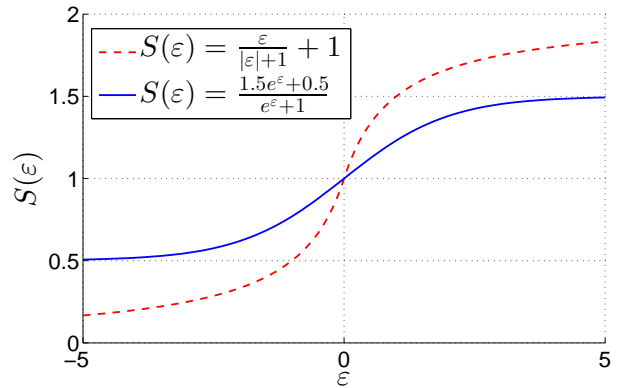}}
\caption{The plots of the functions $S(\varepsilon)=\frac{\varepsilon}{1+|\varepsilon|}+1$ and $S(\varepsilon)=\frac{1.5e^{\varepsilon}+0.5}{e^{\varepsilon}+1}$.}
\label{Fig000}
\end{figure}

\textit{Example 3.}
Let $\Phi(\varepsilon,t)=\frac{\overline{g}(t)e^{\varepsilon}+\underline{g}(t)}{e^{\varepsilon}+1}$, 
where $\Phi(\varepsilon,t) \in \mathbb R$, $\varepsilon \in \mathbb R$, 
the functions $\overline{g}(t)$, $\underline{g}(t)$, $\dot{\overline{g}}(t)$ and $\dot{\underline{g}}(t)$ are bounded for all $t$ and $\underline{g}(t)<\overline{g}(t)$.
Taking into account \eqref{eq2_2}, the inverse function $\varepsilon=\ln\frac{\underline{g}-y}{y-\overline{g}}$ is performed for $\underline{g}(t)<y(t)<\overline{g}(t)$ for all $t$

\textit{Example 4.}
Let $\Phi(\varepsilon,t)$ be presented in the form 
$\Phi(\varepsilon,t)=\begin{cases}
   \overline{g}(t)+0.5(\underline{g}(t)-\overline{g}(t))e^{-\varepsilon},~ \varepsilon \geq 0,\\
   \underline{g}(t)+0.5(\overline{g}(t)-\underline{g}(t))e^{\varepsilon},~ \varepsilon < 0,
\end{cases}$
where the functions $\overline{g}(t)$ and $\underline{g}(t)$ are the same as in Example 3.
Taking into account \eqref{eq2_2}, the inverse function takes the form 
$\varepsilon=\begin{cases}
   \ln\frac{\overline{g}-\underline{g}}{2(\overline{g}-y)},~ 0 \leq y < \overline{g},\\
   \ln\frac{2(y-\underline{g})}{\overline{g}-\underline{g}},~ \underline{g} <y < 0. 
\end{cases}$

Now we define the dynamics of the variable $\varepsilon(t)$ for the ISS analysis of the closed-loop system. Take the derivative of \eqref{eq2_2} w.r.t. $t$ and rewrite result as
$\dot{y}=\frac{\partial\Phi(\varepsilon,t)}{\partial \varepsilon}
\dot{\varepsilon}+\frac{\partial\Phi(\varepsilon,t)}{\partial t}$.
It follows from condition (c) that
$det\left(\frac{\partial\Phi(\varepsilon,t)}{\partial \varepsilon}\right) \neq 0$. Taking into account \eqref{eq2_1}, rewrite the dynamics of $\varepsilon(t)$ in the form
\begin{equation}
\label{eq2_3}
\begin{array}{l} 
\dot{\varepsilon}=\left(\frac{\partial\Phi(\varepsilon,t)}{\partial \varepsilon}\right)^{-1}
\left(\dot{y}-\frac{\partial\Phi(\varepsilon,t)}{\partial t}
\right).
\end{array}
\end{equation}

\begin{theorem}
\label{Th1}
Let conditions (a)-(d) hold for \eqref{eq2_2}. If there exists the control law $u$
such that the solutions of \eqref{eq2_1} and \eqref{eq2_3} are bounded, then $y(t) \in \mathcal Y_\alpha \subset \mathcal Y$.
If the solutions of \eqref{eq2_3} are unbounded, then $y(t) \in \mathcal Y_\beta \subseteq \mathcal Y$.
\end{theorem}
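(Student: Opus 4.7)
The plan is to read the conclusion off of condition (a) directly, once one knows that $\varepsilon(t)$ is a bona fide $\mathbb R^v$-valued signal. Condition (a) asserts the strict inclusion $\Phi_i(\varepsilon,t)\in(\underline g_i(t),\overline g_i(t))$ for \emph{every} $\varepsilon\in\mathbb R^v$ and every $t\ge 0$; hence, whenever $\varepsilon(t)\in\mathbb R^v$ is well defined, the identity $y(t)=\Phi(\varepsilon(t),t)$ forces $y(t)\in\mathcal Y$. The theorem refines this trivial observation by distinguishing how far from the boundary of $\mathcal Y$ the trajectory is forced to stay.

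First I would verify that \eqref{eq2_3} is a consistent rewriting of the $y$-dynamics: (c) makes $(\partial\Phi/\partial\varepsilon)^{-1}$ well defined and continuous, (d) keeps $\partial\Phi/\partial t$ from blowing up, and (b) makes $\varepsilon\mapsto\Phi(\varepsilon,t)$ a bijection onto $\mathcal Y$ at each $t$. Together these permit the chain-rule identity $\dot y=(\partial\Phi/\partial\varepsilon)\dot\varepsilon+\partial\Phi/\partial t$ to be inverted into \eqref{eq2_3}, so $x(\cdot),u(\cdot),\varepsilon(\cdot)$ are mutually consistent whenever the closed-loop trajectories exist.

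For the bounded case, I would pick $R>0$ with $|\varepsilon(t)|\le R$ for all $t\ge 0$ and use continuity of $\Phi_i$, the boundedness of $\underline g_i,\overline g_i$ from the problem statement, and the strict inequalities of (a) to extract constants $\alpha_i>0$ with
\[
\underline g_i(t)+\alpha_i \le \Phi_i(\varepsilon,t) \le \overline g_i(t)-\alpha_i,\qquad |\varepsilon|\le R,\;t\ge 0.
\]
Setting $\mathcal Y_\alpha=\{y\in\mathbb R^v:\underline g_i(t)+\alpha_i<y_i<\overline g_i(t)-\alpha_i,\;i=1,\dots,v\}$ then gives a strict inclusion $\mathcal Y_\alpha\subset\mathcal Y$ with $y(t)\in\mathcal Y_\alpha$. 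For the unbounded case, at each finite $t$ the value $\varepsilon(t)$ is still in $\mathbb R^v$, so (a) alone forces $y(t)\in\mathcal Y$; however, as $|\varepsilon(t)|\to\infty$ the value $\Phi_i(\varepsilon(t),t)$ may approach either boundary $\underline g_i(t)$ or $\overline g_i(t)$, so no uniform margin exists and the set $\mathcal Y_\beta$ may coincide with $\mathcal Y$, yielding only the weak inclusion.

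The main obstacle I anticipate is the uniform-gap claim in the bounded case: because $t$ ranges over $[0,\infty)$ while $\underline g_i,\overline g_i$ are only required to be bounded, producing a single $\alpha_i>0$ that works for all $t\ge 0$ is not automatic from (a)--(d). One needs an additional uniformity of the form $\inf_{|\varepsilon|\le R,\,t\ge 0}\bigl(\overline g_i(t)-\Phi_i(\varepsilon,t)\bigr)>0$ and its lower-side analogue. This is transparent in Examples 1--4 (where $\Phi$ depends on $t$ only through $\underline g,\overline g$, so the gap to the boundary scales with $\overline g-\underline g$ on the compact set $|\varepsilon|\le R$), but in the abstract setting it is the one place where the proof needs slightly more than what (a)--(d) literally supply.
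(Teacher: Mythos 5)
Your argument is essentially the paper's: in the bounded case restrict $\varepsilon$ to a closed ball and use condition (a) together with continuity of $\Phi$ to obtain a margin away from the boundary of $\mathcal Y$; in the unbounded case fall back on (a) alone. The one substantive difference is how $\mathcal Y_\alpha$ is built, and it is exactly the point you flag as your anticipated obstacle. You define $\mathcal Y_\alpha$ with time-independent constants $\alpha_i$, which requires the uniform-in-$t$ gap $\inf_{|\varepsilon|\le R,\,t\ge 0}\bigl(\overline g_i(t)-\Phi_i(\varepsilon,t)\bigr)>0$; as you correctly observe, this does not follow from (a)--(d) alone. The paper avoids this entirely by letting the margins be time-dependent: it sets $\underline M_i(t)=\inf_{|\varepsilon|\le N}\Phi_i(\varepsilon,t)$ and $\overline M_i(t)=\sup_{|\varepsilon|\le N}\Phi_i(\varepsilon,t)$ and defines $\mathcal Y_\alpha=\{y:\underline M_i(t)\le y_i(t)\le\overline M_i(t)\}$. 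The strict containment $\mathcal Y_\alpha\subset\mathcal Y$ then needs only the pointwise-in-$t$ inequalities $\underline g_i(t)<\underline M_i(t)$ and $\overline M_i(t)<\overline g_i(t)$, which hold because for each fixed $t$ the extremum of the continuous function $\Phi_i(\cdot,t)$ over the compact ball $\{|\varepsilon|\le N\}$ is attained at some point where (a) gives a strict inequality. (The paper attributes this to bijectivity of $\Phi$, but compactness plus (a) is the honest reason.) So your proof is sound once you replace the constant margins by these time-varying envelopes; no assumption beyond (a)--(d) is then needed. Your treatment of the unbounded case, via the infimum and supremum over all of $\mathbb R^v$ yielding only the non-strict inclusion $\mathcal Y_\beta\subseteq\mathcal Y$, matches the paper exactly. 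One caveat shared by both arguments: the unbounded case tacitly assumes the closed-loop solution exists for all $t\ge 0$ (no finite escape time), since otherwise $\varepsilon(t)$ ceases to be defined and the conclusion $y(t)\in\mathcal Y$ is vacuous beyond the escape time.
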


\begin{proof}
Let the control law $u$ be chosen such that the solutions of \eqref{eq2_3} are bounded. Then $|\varepsilon(t)|<N$ for all $t$, where $N>0$. 
According to \eqref{eq2_2},
$y \in \mathcal{Y}_\alpha=\big\{y \in \mathbb R^v: \underline M_i(t) \leq y_i(t) \leq \overline M_i(t),~i=1,...,v\big\}$ for all $t$,
where $\underline M_i(t)=\inf\limits_{|\varepsilon| \leq N}\{\Phi_i(\varepsilon,t)\}$ and $\overline M_i(t)=\sup\limits_{|\varepsilon| \leq N}\{\Phi_i(\varepsilon,t)\}$.
Since \eqref{eq2_2} is a bijective function, $\overline M_i(t)<\overline g_i(t)$ and $\underline M_i(t)>\underline g_i(t)$ for all $t$.

If the control law does not provide the boundedness of the solution of \eqref{eq2_3}, then
$y \in \mathcal{Y}_\beta=\big\{y \in \mathbb R^v: \underline S_i(t) < y_i(t)  < 
\\ \overline S_i(t),~i=1,...,v\big\}$,
where $\underline S_i(t)=\inf\limits_{\varepsilon \in \mathbb R^v}\{\Phi_i(\varepsilon,t)\}$ and $\overline S_i(t)=\sup\limits_{\varepsilon \in \mathbb R^v}\{\Phi_i(\varepsilon,t)\}$ for all $ t $.
Since \eqref{eq2_2} is a bijective function, $\overline S_i(t) \leq \overline g_i(t)$ and $\underline S_i(t) \geq \underline g_i(t)$ for all $ t $ .
Theorem \ref{Th1} is proved.
\end{proof}

In the next sections we will demonstrate the proposed method for some plants.

\section{State feedback control for linear plants under disturbances}
\label{Sec4}

Let the plant be described by the following linear differential equation
\begin{equation}
\label{eq2_4}
\begin{array}{l} 
\dot{x}=Ax+Bu+Df,
\\
y=Lx.
\end{array}
\end{equation}
The signals $x \in \mathbb R^n$, $u \in \mathbb R$, and $y \in \mathbb R$ are measured, $f \in \mathbb R^l$ is the unknown bounded disturbance, the matrices $A \in \mathbb R^{n \times n}$, $B \in \mathbb R^{n}$ and $L \in \mathbb R^{1 \times n}$ are known, the matrix $D$ is unknown. The pair $(A,B)$ is controllable and the pair $(L,A)$ is observable.

We formulate a result that contains the "simplest" $~$ control law in the sense of the "convenience" $~$ stability analysis of the closed-loop system.

\begin{theorem}
\label{Th02}
Let conditions (a)-(d) hold for transformation \eqref{eq2_2}, 
$\frac{\partial\Phi(\varepsilon,t)}{\partial \varepsilon}>0$ for all $\varepsilon$ and $t$, and there exists the vector $T \in \mathbb R^n$ such that the matrix $(I-(LB)^{-1}BL)A-TL$ is Hurwitz. Given $\alpha>0$ and $K>0$ there exists $\beta>0$ such that the linear matrix inequality (LMI)
\begin{equation}
\label{LMI_Th02}
\begin{array}{l}
\begin{bmatrix} \alpha-K & 0.5\\0.5 & -\beta \end{bmatrix} \leq 0
\end{array}
\end{equation}
holds. Then the control law
\begin{equation}
\label{eq2_6}
\begin{array}{l} 
u=-(LB)^{-1}\left[LAx+ K \varepsilon\right]
\end{array}
\end{equation}
ensures goal \eqref{eq2_20}.
\end{theorem}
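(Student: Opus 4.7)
\emph{Plan.} The plan is to show boundedness of $\varepsilon(t)$ by a scalar Lyapunov argument built on the LMI \eqref{LMI_Th02}, after which Theorem~\ref{Th1} delivers $y(t)\in\mathcal{Y}$; boundedness of $x(t)$ will then follow from the Hurwitz hypothesis on $(I-B(LB)^{-1}L)A-TL$, giving the claimed ISS of the closed loop. Substituting \eqref{eq2_6} into \eqref{eq2_4} gives $\dot y=LAx+LBu+LDf=-K\varepsilon+LDf$, and plugging this into \eqref{eq2_3} yields
\begin{equation*}
\dot\varepsilon=q(\varepsilon,t)\bigl(-K\varepsilon+\mu(\varepsilon,t)\bigr),
\end{equation*}
with $q(\varepsilon,t):=(\partial\Phi/\partial\varepsilon)^{-1}>0$ (the strict sign hypothesis on $\partial\Phi/\partial\varepsilon$ is used here) and $\mu:=LDf-\partial\Phi/\partial t$ uniformly bounded on $t\ge 0$ by condition~(d) and the standing boundedness of $f$.

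\emph{Lyapunov step.} I would pick $V=\varepsilon^{2}/2$, so that $\dot V=q(\varepsilon,t)(-K\varepsilon^{2}+\varepsilon\mu)$. The LMI \eqref{LMI_Th02} is, by direct expansion of the associated quadratic form, equivalent to the scalar inequality $(\alpha-K)a^{2}+ab-\beta b^{2}\le 0$ for all $(a,b)\in\mathbb{R}^{2}$; this is the S-procedure reading I need. Setting $a=\varepsilon$ and $b=\mu$ gives the cross-term bound $\varepsilon\mu\le(K-\alpha)\varepsilon^{2}+\beta\mu^{2}$, so
\begin{equation*}
\dot V\le q(\varepsilon,t)\bigl(-\alpha\varepsilon^{2}+\beta\mu^{2}\bigr).
\end{equation*}
Since $q>0$ and $|\mu|$ admits a uniform bound $\bar\mu$, the right-hand side is strictly negative whenever $\varepsilon^{2}>\beta\bar\mu^{2}/\alpha$, so $\varepsilon(t)$ stays in a bounded set for all $t\ge 0$. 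Theorem~\ref{Th1} then yields $y(t)\in\mathcal{Y}_\alpha\subset\mathcal{Y}$, which is exactly goal \eqref{eq2_20}.

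\emph{State boundedness and main obstacle.} For the ISS claim I would rewrite the closed-loop state equation, obtained by substituting \eqref{eq2_6} into \eqref{eq2_4} and adding and subtracting $TLx=Ty$, as
\begin{equation*}
\dot x=\bigl[(I-B(LB)^{-1}L)A-TL\bigr]x+Ty-B(LB)^{-1}K\varepsilon+Df.
\end{equation*}
The bracketed matrix is Hurwitz by hypothesis and the forcing signals $y$, $\varepsilon$, and $f$ are all bounded (by condition~(a), the previous step, and the assumption on $f$), so standard linear ISS delivers $x(t)$ bounded. The step I expect to be the main obstacle is the Lyapunov estimate for $\varepsilon$: the positive but $\varepsilon$-dependent factor $q(\varepsilon,t)$ produced by inverting the nonlinear change of coordinates must not destroy the negative definiteness of $\dot V$ outside a sufficiently large level set, and it is precisely the combination of the strict sign hypothesis on $\partial\Phi/\partial\varepsilon$ and the S-procedure interpretation of \eqref{LMI_Th02} that is designed to ensure this.
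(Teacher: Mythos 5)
Your proposal is correct and follows essentially the same route as the paper: the same closed-loop decomposition into the $\varepsilon$-equation $\dot\varepsilon=(\partial\Phi/\partial\varepsilon)^{-1}(-K\varepsilon+\varphi)$ and the Hurwitz-driven $x$-equation, the same Lyapunov function $V=\varepsilon^2/2$, and the same quadratic-form reading of LMI \eqref{LMI_Th02}, concluding via Theorem~\ref{Th1}. Your explicit observation that only positivity (not a uniform lower bound) of $(\partial\Phi/\partial\varepsilon)^{-1}$ is needed for the level-set argument is a slightly more careful rendering of the step the paper states as ``system \eqref{eq2_8a} is stable.''
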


\begin{remark}
\label{Rem2}
Note that the model \eqref{eq2_4} with the Hurwitz matrix $(I-B(LB)^{-1}L)A-TL$ can describe many technical and technological systems. For example, the control of distillation column \cite{Afanasiev96,Bia05}, where the control signal is the irrigation flow and the output signal is the composition of the light fractions of the column top; the aircraft control
\cite{Afanasiev96,Fradkov11} at various heights and Mach numbers, where $u$ is the control of elevators, $y$ is vertical acceleration; electric DC motor control \cite{Ruderman08}, where the control signal is the input voltage, the output signal is the angular velocity, etc.
\end{remark}

\begin{proof}
Taking into account \eqref{eq2_2} and \eqref{eq2_4}, rewrite expression \eqref{eq2_3} in the form
\begin{equation}
\label{eq2_5}
\begin{array}{l} 
\dot{\varepsilon}=\left(\frac{\partial\Phi(\varepsilon,t)}{\partial \varepsilon}\right)^{-1}
\left(LAx+LBu+\varphi\right),
\end{array}
\end{equation}
where $\varphi =LDf-\frac{\partial\Phi(\varepsilon,t)}{\partial t}$ is the bounded function w.r.t. $\varepsilon$ and $t$.
Substituting the control law \eqref{eq2_6} into the first equation of \eqref{eq2_4} and \eqref{eq2_5}, we get
\begin{align}
& \dot{x}=(A-B(LB)^{-1}LA-TL)x
\\
&~~~~~~ -KB(LB)^{-1}\varepsilon+Df+T\Phi(\varepsilon,t),
\label{eq2_8} \\
& \dot{\varepsilon}=\left(\frac{\partial\Phi(\varepsilon,t)}{\partial \varepsilon}\right)^{-1}
\left[-K\varepsilon+\varphi\right].
\label{eq2_8a}
\end{align}
Analyze equation \eqref{eq2_8a} on the ISS. To this end, choose Lyapunov function of the form $V=0.5\varepsilon^{2}$.
Substituting \eqref{eq2_8a} into the condition $\dot{V}+2\alpha V\left(\frac{\partial\Phi(\varepsilon,t)}{\partial \varepsilon}\right)^{-1}-\beta \varphi^2 \left(\frac{\partial\Phi(\varepsilon,t)}{\partial \varepsilon}\right)^{-1} \leq 0$, where $\alpha>0 $ and $\beta>0$, we get $-(K-\alpha)\varepsilon^2+\varepsilon \varphi-\beta \varphi^2 \leq 0$.
If LMI \eqref{LMI_Th02} holds, then the last inequality is satisfied and system \eqref{eq2_8a} is stable.
Consequently, the signal $\varepsilon(t)$ is bounded.
If the matrix $A-B(LB)^{-1}LA-TL$ is Hurwitz, then the boundedness of the signal $x(t)$ follows from the boundedness of the signals $\varepsilon(t)$, 
$\Phi(\varepsilon,t)$ and $f(t)$. Therefore, the control law $u(t)$ given by \eqref{eq2_6} is bounded. Taking into account Theorem \ref{Th1}, goal \eqref{eq2_2} is satisfied. Theorem \ref{Th02} is proved.
\end{proof}

\textit{Example 5.}
Let in \eqref{eq2_4} parameters are given in the forms
\begin{equation}
\label{eq2_10}
\begin{array}{l} 
A=\begin{bmatrix}
0 & 1\\
1 & 2
\end{bmatrix},~~
B=\begin{bmatrix}
0\\
1
\end{bmatrix},~~ 
D=\begin{bmatrix}
1\\
1
\end{bmatrix},~~
L=[1~2],
\\
x(0)=[2~1]^{\rm T},
~~
f(t)=0.1+\sin(3t)+sat\left(\frac{d(t)}{0.3}\right),
\end{array}
\end{equation}
where $sat(\cdot)$ is the saturation function, the signal $d(t)$ is simulated in Matlab Simulink by using the "Band-Limited White Noise" block with a noise power of 0.1 and a sampling time of 0.1. It is required to ensure that the output signal $y(t)$ belongs to the set $\underline{r}g(t)<y(t)<\overline{r}g(t)$, where $\underline{r}=0.8$ and $\overline{r}=1$, and the function $g(t)$ will be given below.

The matrix $A-B(LB)^{-1}LA-TL$ is Hurwitz, for example, for all $T=[T_1~T_2]^{\rm T}$, where $T_1>0$ and $T_2>0$. Choose $K=1$ in \eqref{eq2_6}.
Define the function $\Phi(\varepsilon,t)$ as in Example 2, where $g$ is given by
\begin{equation}
\label{eq2_9}
\begin{array}{l} 
g(t)=(g_0-g_{\infty})e^{-kt}+g_{\infty}.
\end{array}
\end{equation}
Here $g_0=y(0)+0.01$, $g_{\infty}=0.1$ and $k=0.5$.
Fig.~\ref{Fig1} shows the transients in $y(t)$, $u(t)$ and $f(t)$. The oscillations of the control signal in Fig.~\ref{Fig1},\textit{b} are caused by the presence of the disturbance $f$. Moreover, it follows from Fig.~\ref{Fig1},\textit{b} that after third second the magnitude of the control signal is comparable with the magnitude of the disturbance. Fig.~\ref{Fig2a} presents the simulations under $f=0$. Thus, the plant can be stabilized in a given set by a not large value of the control signal.
\begin{figure}[h!]
\begin{minipage}[h]{0.49\linewidth}
\center{\includegraphics[width=1\linewidth]{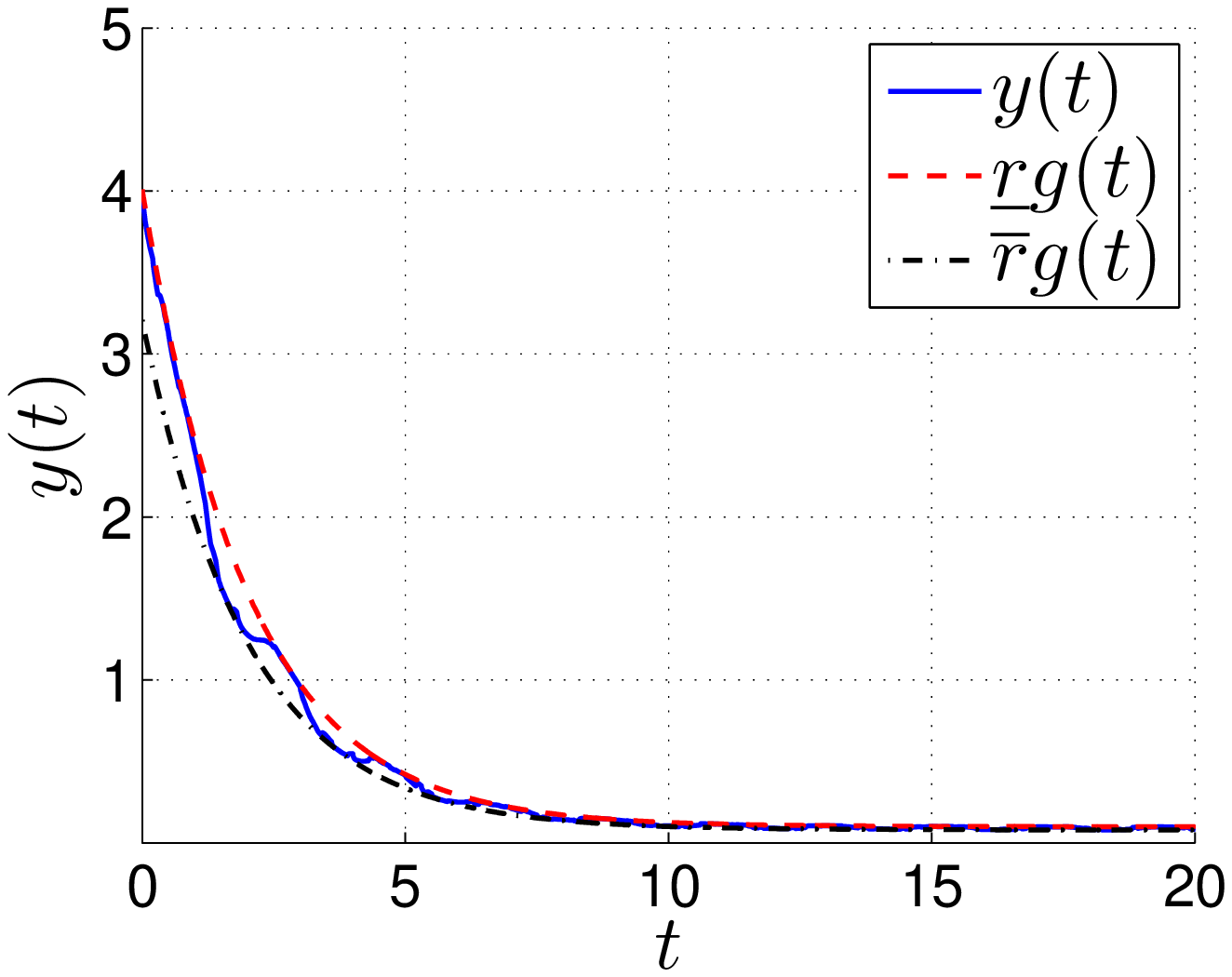}} \\ \textit{a}
\end{minipage}
\hfill
\begin{minipage}[h]{0.49\linewidth}
\center{\includegraphics[width=1\linewidth]{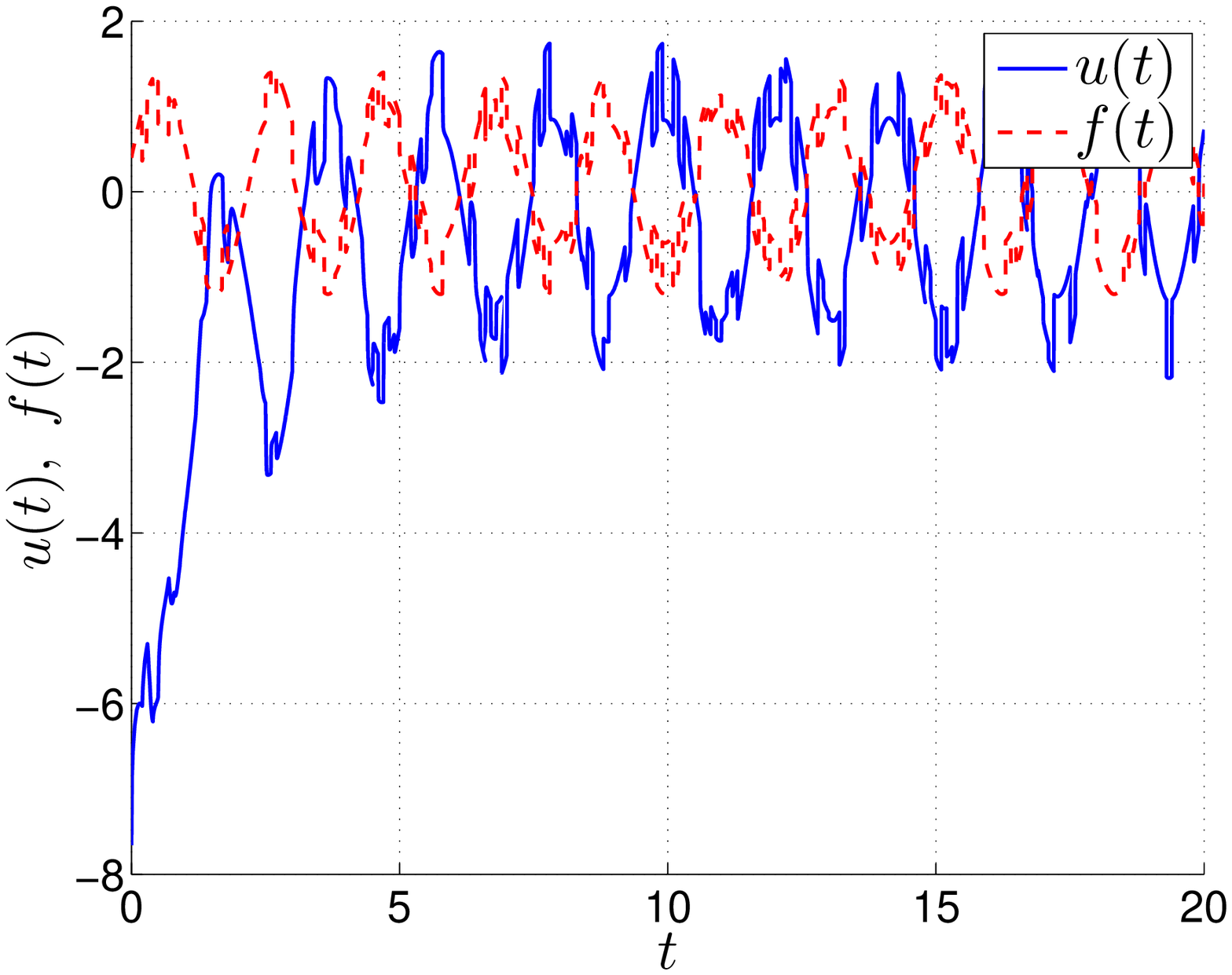}} \\ \textit{b}
\end{minipage}
\caption{The transients in $y(t)$ (\textit{a}), $u(t)$ è $f(t)$ (\textit{b}) for $g(t)$ given by \eqref{eq2_9}.}
\label{Fig1}
\end{figure}

\begin{figure}[h!]
\begin{minipage}[h]{0.49\linewidth}
\center{\includegraphics[width=1\linewidth]{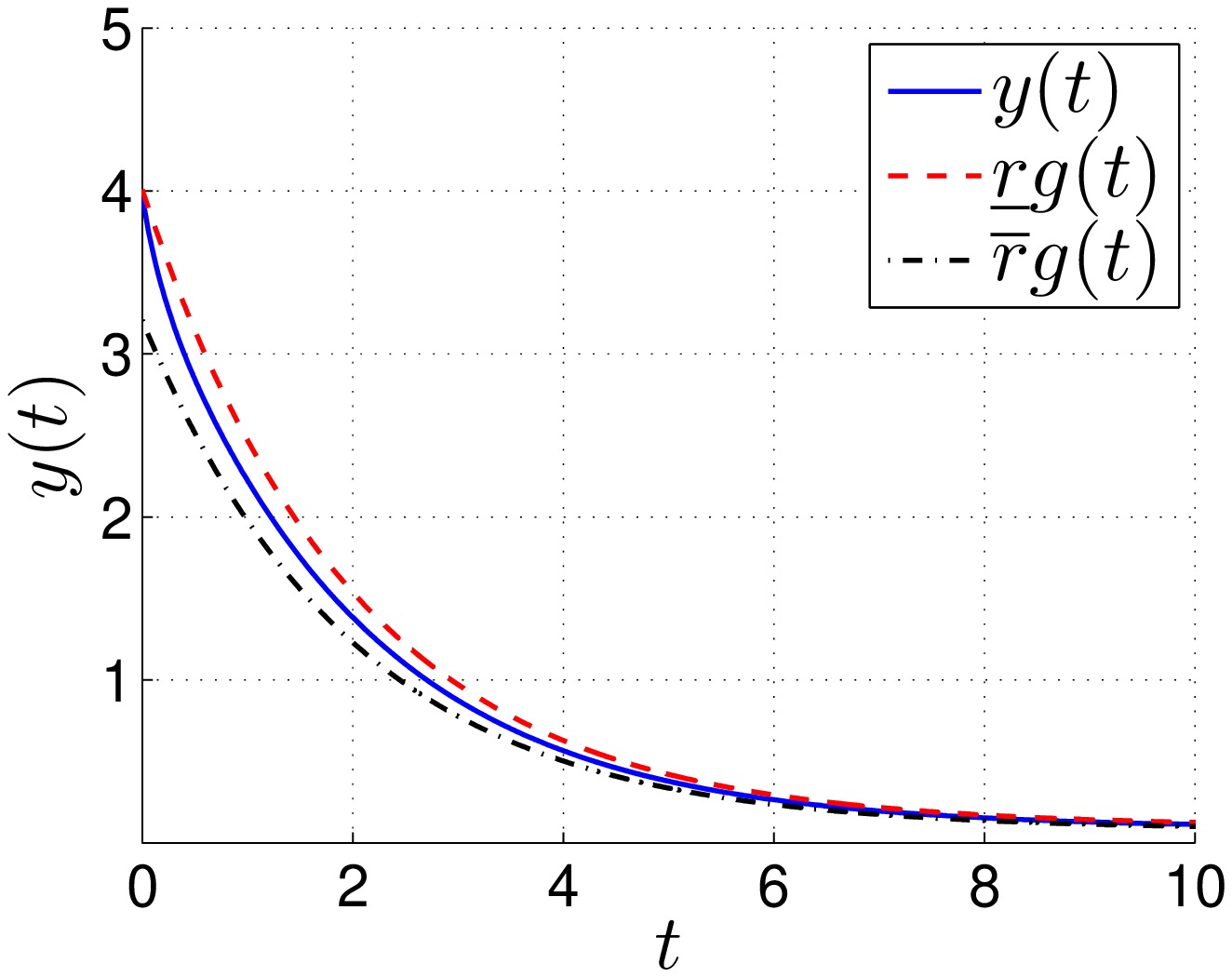}} \\ \textit{a}
\end{minipage}
\hfill
\begin{minipage}[h]{0.49\linewidth}
\center{\includegraphics[width=1\linewidth]{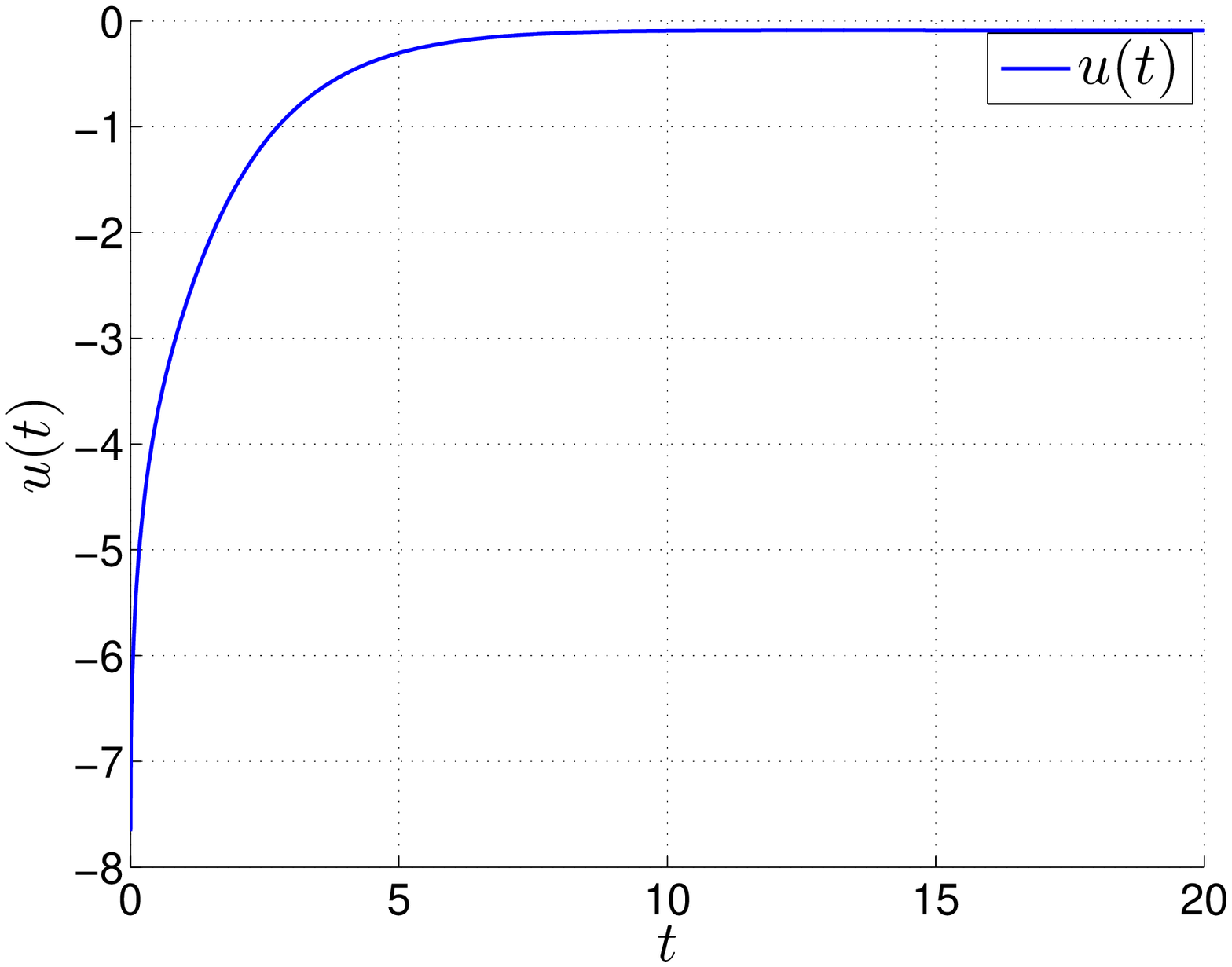}} \\ \textit{b}
\end{minipage}
\caption{The transients in $y(t)$ (\textit{a}) è $u(t)$ (\textit{b}) for $g(t)$ given by \eqref{eq2_11} for $f=0$.}
\label{Fig2a}
\end{figure}

Fig.~\ref{Fig2} shows the simulations for $y(t)$ and $u(t)$ for the set $0.8g(t)<y(t)<g(t)$, where the function $g(t)$ is given by
\begin{equation}
\label{eq2_11}
\begin{array}{l} 
g(t)=g_0\sin(kt)+g_0+g_{\infty}. 
\end{array}
\end{equation}

\begin{figure}[h!]
\begin{minipage}[h]{0.49\linewidth}
\center{\includegraphics[width=1\linewidth]{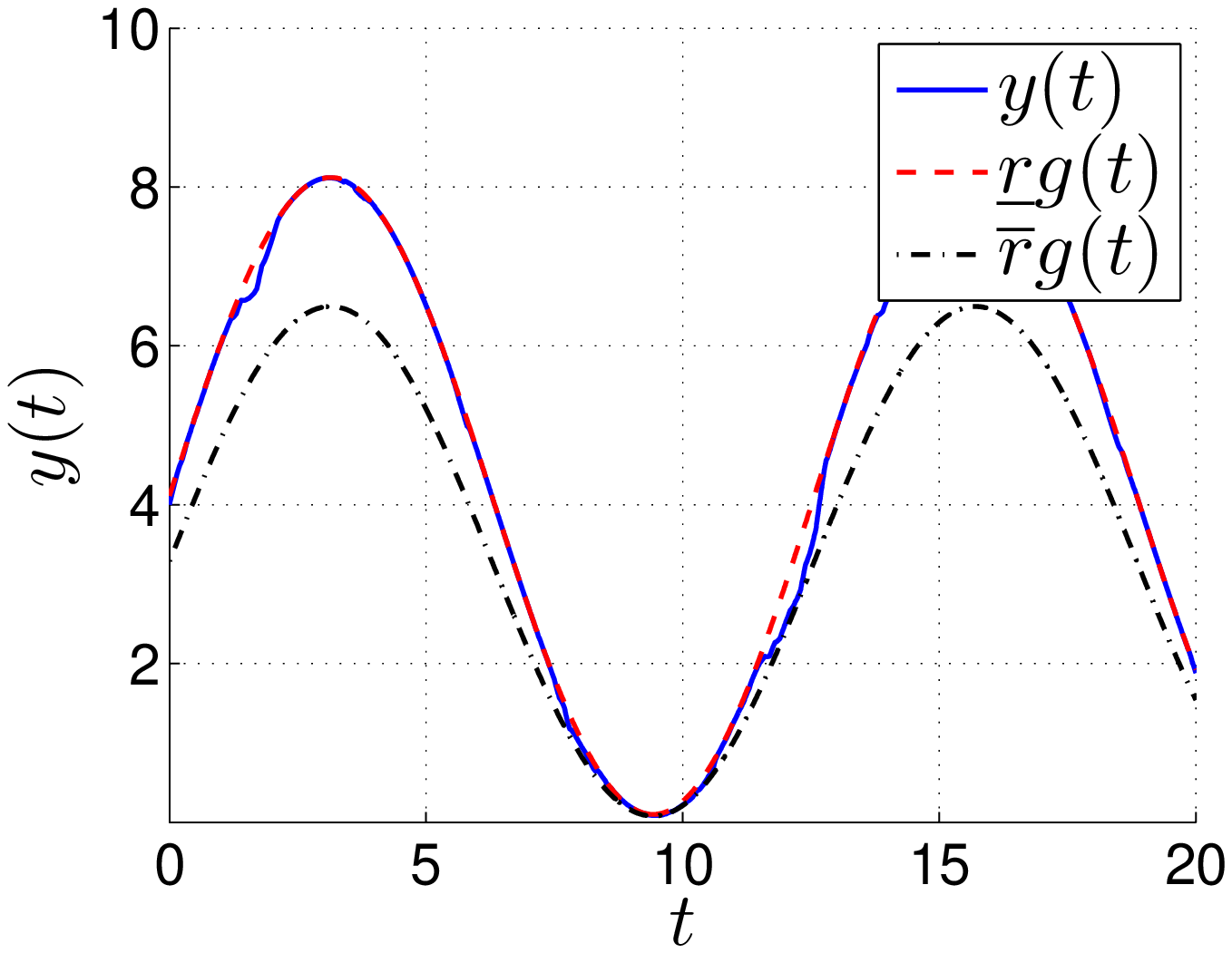}} \\ \textit{a}
\end{minipage}
\hfill
\begin{minipage}[h]{0.49\linewidth}
\center{\includegraphics[width=1\linewidth]{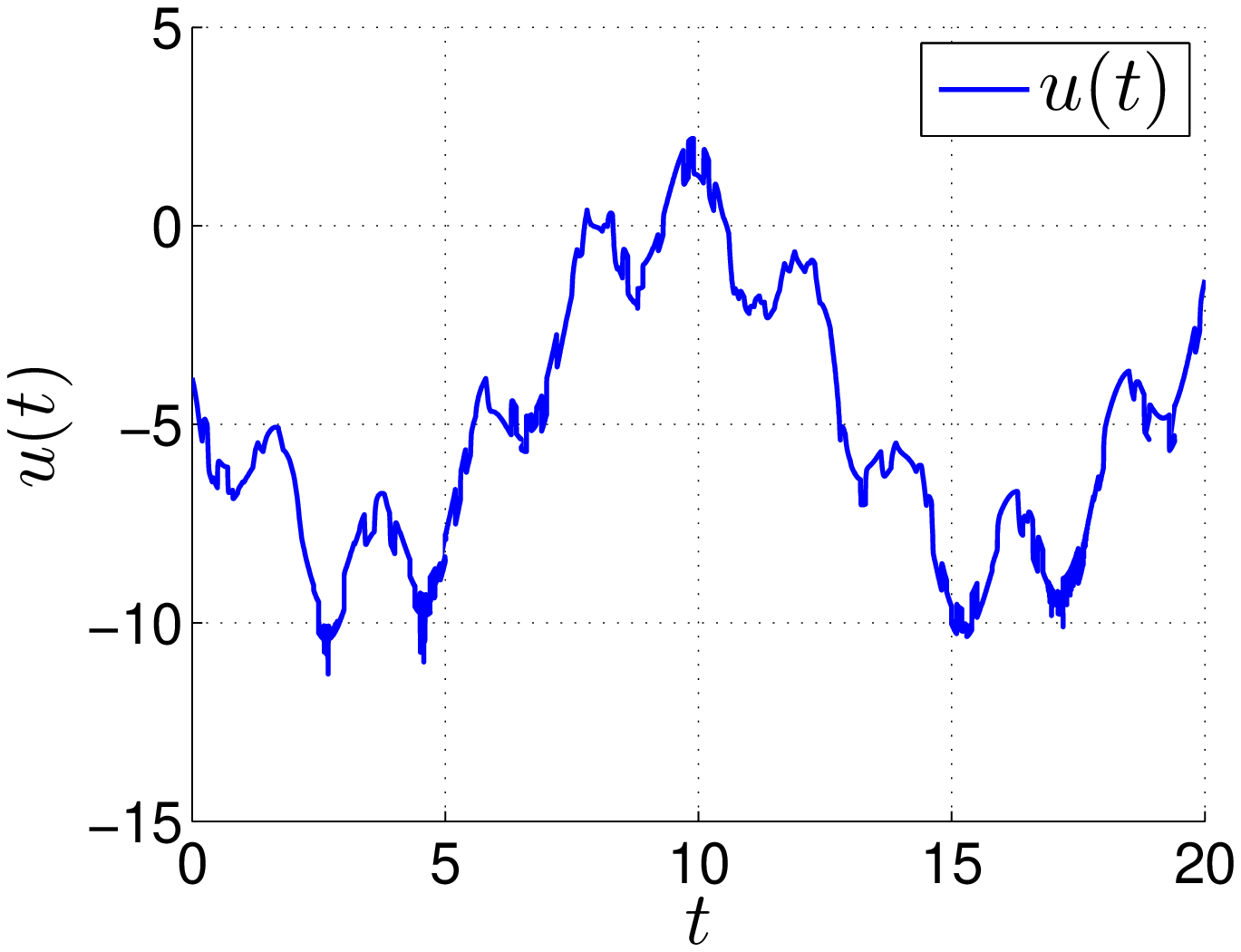}} \\ \textit{b}
\end{minipage}
\caption{The transients in $y(t)$ (\textit{a}) è $u(t)$ (\textit{b}) for $g(t)$ given by \eqref{eq2_11}.}
\label{Fig2}
\end{figure}

\section{Output feedback control for plants with sector nonlinearity and disturbances}
\label{Sec5}

Consider a plant model in the form

\begin{equation}
\label{eq5_1}
\begin{array}{l} 
\dot{x}=Ax+G\varphi(x,t)+Bu+Df,
\\
y=Lx.
\end{array}
\end{equation}
Here the state vector $x \in \mathbb R^n$ is unmeasured, $u \in \mathbb R^m$ and $y \in \mathbb R^v$ are measured signals, the disturbance $f \in \mathbb R^l$ is bounded signal. The matrices $A \in \mathbb R^{n \times n}$, $G \in \mathbb R^{n \times k}$, $B \in \mathbb R^{n \times m}$ and $L \in \mathbb R^{v \times n}$ are known and the matrix $D \in \mathbb R^{n \times l}$ is unknown. Unknown nonlinearity $ \varphi (x,t) \in \mathbb R ^ {k} $ satisfies the condition $|\varphi(x,t)| \leq C|x|$, $C>0$ is a known constant. The pair $(A,B)$ is controllable and the pair $(L,A)$ is observable.

Introduce the control law in the form
\begin{equation}
\label{eq5_2}
\begin{array}{l} 
u=K_1y+K_2\varepsilon,
\end{array}
\end{equation}
where $K_1 \in \mathbb R^{m \times v}$ and $K_2  \in \mathbb R^{m \times v}$ are chosen by the designer. In particular, $K_1$ and $K_2$ can be chosen such that the matrices $A+BK_1L$ and $LBK_2$ are Hurwitz.
Taking into account \eqref{eq2_2} and \eqref{eq5_2}, rewrite \eqref{eq2_3} and \eqref{eq5_1} in the forms
\begin{equation}
\label{eq5_3}
\begin{array}{l} 
\dot{x}=(A+BK_1L+T_1L)x+BK_2\varepsilon+G\varphi(x)
\\~~~~~+Df-T_1\Phi(\varepsilon,t),
\\
\dot{\varepsilon}=\left(\frac{\partial\Phi(\varepsilon,t)}{\partial \varepsilon}\right)^{-1}
\big[LBK_2\varepsilon+(LA+LBK_1L
\\
~~~~~+T_2L)x+LG\varphi(x)+LDf
\\
~~~~~-\frac{\partial\Phi(\varepsilon,t)}{\partial t}-T_2\Phi(\varepsilon,t)\big].
\end{array}
\end{equation}
Here $T_1 \in \mathbb R^{n \times v}$ and $T_2 \in \mathbb R^{v \times v}$. Introduce the following notation

\begin{equation}
\label{eq5_4}
\begin{array}{l}
x_e=col\{x,~\varepsilon\},~~~f_e=col\left\{f,~\frac{\partial\Phi(\varepsilon,t)}{\partial t},~\Phi(\varepsilon,t)\right\},
\\
A_{21}(\varepsilon,t)=\left(\frac{\partial\Phi(\varepsilon,t)}{\partial \varepsilon}\right)^{-1}(LA+LBK_1L+T_2L),
\\
A_{22}(\varepsilon,t)=\left(\frac{\partial\Phi(\varepsilon,t)}{\partial \varepsilon}\right)^{-1}LBK_2,
\\
A_e(\varepsilon,t)=\begin{bmatrix}
A+BK_1L+T_1L & BK_2\\
A_{21} & A_{22}
\end{bmatrix},
\\
G_e(\varepsilon,t)=\begin{bmatrix}
G\\
\left(\frac{\partial\Phi(\varepsilon,t)}{\partial \varepsilon}\right)^{-1}LG
\end{bmatrix},
\\
D_{21}(\varepsilon,t)=\left(\frac{\partial\Phi(\varepsilon,t)}{\partial \varepsilon}\right)^{-1}LD,
\\
D_{22}(\varepsilon,t)=-\left(\frac{\partial\Phi(\varepsilon,t)}{\partial \varepsilon}\right)^{-1},
\\
D_{23}(\varepsilon,t)=-\left(\frac{\partial\Phi(\varepsilon,t)}{\partial \varepsilon}\right)^{-1}T_2,
\\
D_e(\varepsilon,t)=\begin{bmatrix}
D & 0 & -T_1\\
D_{21}(\varepsilon,t) & D_{22}(\varepsilon,t) & D_{23}(\varepsilon,t)
\end{bmatrix}.
\end{array}
\end{equation}

Considering \eqref{eq5_4}, rewrite \eqref{eq5_3} as follows
\begin{equation}
\label{eq5_5}
\begin{array}{l} 
\dot{x}_e=A_e(\varepsilon,t) x_e+G_e(\varepsilon,t) \varphi(x,t)+D_e(\varepsilon,t) f_e.
\end{array}
\end{equation}

\begin{theorem}
\label{Th2}
Let conditions (a)-(d) hold for transformation \eqref{eq2_2},
$\frac{\partial\Phi(\varepsilon,t)}{\partial \varepsilon}>0$ for all $\varepsilon$ and $t$. 
Given $\alpha>0$, $K_1$, $K_2$, $T_1$ and $T_2$ there exist the coefficient $\beta>0$ and the matrix $P=P^{\rm T}>0 $ such that the following matrix inequality holds
\begin{equation}
\label{eq5_Th2}
\begin{array}{l}
\begin{bmatrix}
 \Psi_{11}(\varepsilon,t) & PG_e(\varepsilon,t) & PD_e(\varepsilon,t)\\
* & -1 & 0\\
* & * & -\beta I
\end{bmatrix}
\leq 0.
\end{array}
\end{equation}
Here $"*"$ defines the symmetric block of the symmetric matrix, $E=[I~0]$, $\Psi_{11}(\varepsilon,t)=A_e(\varepsilon,t)^{\rm T}P+PA_e(\varepsilon,t)+\alpha P+C^2 E^{\rm T}E$. Then control law \eqref{eq5_2} ensures goal \eqref{eq2_20}.
\end{theorem}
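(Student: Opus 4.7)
My plan is to establish input-to-state stability of the extended system \eqref{eq5_5} with respect to the aggregate disturbance $f_e$ via a quadratic Lyapunov function, and then invoke Theorem~\ref{Th1}. Note first that by condition (a) the function $\Phi(\varepsilon,t)$ is bounded uniformly in $\varepsilon$, by condition (d) so is $\partial\Phi/\partial t$, and $f$ is bounded by assumption; hence $f_e=\mathrm{col}\{f,\partial\Phi/\partial t,\Phi\}$ is bounded uniformly in time along any trajectory.

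Take $V(x_e)=x_e^{\mathrm T} P x_e$ with $P$ from the hypothesis. Differentiating along \eqref{eq5_5} produces three terms, of which the two cross terms I would dominate by Young's inequality:
\begin{equation*}
2x_e^{\mathrm T}PG_e\varphi \leq x_e^{\mathrm T}PG_eG_e^{\mathrm T}Px_e + \varphi^{\mathrm T}\varphi,\qquad 2x_e^{\mathrm T}PD_e f_e \leq \beta^{-1}x_e^{\mathrm T}PD_e D_e^{\mathrm T}Px_e + \beta|f_e|^2.
\end{equation*}
The sector bound gives $\varphi^{\mathrm T}\varphi \leq C^2 x_e^{\mathrm T} E^{\mathrm T}E x_e$, which is exactly the term baked into $\Psi_{11}$. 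Two applications of the Schur complement to \eqref{eq5_Th2} (first with respect to the $(3,3)$ block $-\beta I$, then with respect to the $(2,2)$ block $-1$) produce the equivalent quadratic inequality
\begin{equation*}
A_e^{\mathrm T}P+PA_e+\alpha P+C^2E^{\mathrm T}E+PG_eG_e^{\mathrm T}P+\beta^{-1}PD_eD_e^{\mathrm T}P \leq 0,
\end{equation*}
which, combined with the Young-inequality bounds, collapses $\dot V$ to
\begin{equation*}
\dot V \leq -\alpha V + \beta|f_e|^2.
\end{equation*}

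Since $|f_e|$ is bounded uniformly in $t$, the comparison lemma shows that $V$, and hence $x_e=\mathrm{col}\{x,\varepsilon\}$, remain bounded on $[0,\infty)$. In particular $\varepsilon(t)$ is bounded, so Theorem~\ref{Th1} delivers $y(t)\in\mathcal Y_\alpha\subset\mathcal Y$, which is goal \eqref{eq2_20}. The main obstacle is that \eqref{eq5_Th2} is a parameter-dependent LMI: $A_e$, $G_e$, $D_e$ all involve $(\partial\Phi/\partial\varepsilon)^{-1}$, so the hypothesis must be read as requiring the inequality to hold pointwise in $(\varepsilon,t)$, and verifying this in practice leans on the sign condition $\partial\Phi/\partial\varepsilon>0$ together with uniform bounds on $(\partial\Phi/\partial\varepsilon)^{-1}$ inherited from the specific choice of $\Phi$ in Examples~1--4.
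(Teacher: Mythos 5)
Your proof is correct and follows essentially the same route as the paper: the same quadratic Lyapunov function $V=x_e^{\rm T}Px_e$ on the extended system \eqref{eq5_5}, the same use of the sector bound $\varphi^{\rm T}\varphi \leq C^2 x_e^{\rm T}E^{\rm T}Ex_e$, the ISS estimate $\dot V \leq -\alpha V+\beta|f_e|^2$, and the appeal to Theorem~\ref{Th1} via boundedness of $\varepsilon$. The only difference is cosmetic: you reduce the parameter-dependent LMI \eqref{eq5_Th2} by two Schur complements and absorb the cross terms with Young's inequality, whereas the paper stacks $z=\mathrm{col}\{x_e,\varphi,f_e\}$ and applies the S-procedure — the two derivations are algebraically equivalent, and your explicit remarks that $f_e$ is bounded (by condition (a), condition (d) and boundedness of $f$) and that \eqref{eq5_Th2} must hold pointwise in $(\varepsilon,t)$ are points the paper leaves implicit.
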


\begin{proof}
For the ISS analysis of \eqref{eq5_5} consider Lyapunov function in the form
$V=x_e^{\rm T}Px_e$. Considering \eqref{eq5_5} and substituting the expression for $V$ in the inequality
\begin{equation}
\label{eq5_7}
\begin{array}{l} 
\dot{V}+\alpha V-\beta f_e^{\rm T} f_e \leq 0,
\end{array}
\end{equation}
we get
\begin{equation}
\label{eq5_8}
\begin{array}{l} 
x_e^{\rm T}[A_e(\varepsilon,t)^{\rm T}P+PA_e(\varepsilon,t)+\alpha P]x_e
-\beta f_e^{\rm T} f_e
\\
+2x_e^{\rm T}PG_e(\varepsilon,t)\varphi(x,t)
+2x_e^{\rm T}PD_e(\varepsilon,t)f_e
 \leq 0.
\end{array}
\end{equation}

Introduce the new vector $z=col\{x_e,\varphi(x,t),f_e\}$ and rewrite inequality \eqref{eq5_8} as
\begin{equation}
\label{eq5_9}
\begin{array}{l} 
z^{\rm T}
\begin{bmatrix}
\Psi_{11}(\varepsilon,t)-C^2E^{\rm T}E & PG_e(\varepsilon,t) & PD_e(\varepsilon,t)\\
* & 0 & 0\\
* & * & -\beta I
\end{bmatrix}
z \leq 0.
\end{array}
\end{equation}

Rewrite the inequality $\varphi^2(x,t) \leq C^2 x_e^{\rm T} E^{\rm T}E x_e$ in the form
\begin{equation}
\label{eq5_10}
\begin{array}{l} 
z^{\rm T}
\begin{bmatrix}
C^2E^{\rm T}E & 0 & 0\\
* & -1 & 0\\
* & * & 0
\end{bmatrix}
z \geq 0.
\end{array}
\end{equation}
According to the S-procedure, inequalities \eqref{eq5_9} and \eqref{eq5_10} are simultaneously satisfied if inequality \eqref{eq5_Th2} holds. Therefore, the function $x_e(t)$ is bounded from \eqref{eq5_7}. Thus, the signals $x(t)$ and $\varepsilon(t)$ are bounded. Then control law \eqref{eq5_2} is bounded. Tacking into account Theorem \ref{Th1}, goal \eqref{eq2_2} is satisfied. Theorem \ref{Th2} is proved.
\end{proof}

\textit{Example 6.} 
Let in \eqref{eq5_1}
\begin{equation*}
\begin{array}{l} 
A=\begin{bmatrix}
0 & 1 & 0\\
0 & 0 & 1\\
0.1 & -2 & -3
\end{bmatrix},~~
B=\begin{bmatrix}
1 & 2\\
1 & 1\\
1 & 2
\end{bmatrix},~~ 
D=\begin{bmatrix}
1\\
1\\
1
\end{bmatrix},~~
\\
L=\begin{bmatrix}
2 & 1 & 1\\
1 & 2 & 1\\
\end{bmatrix},
~~
G=
\begin{bmatrix}
0 & 0 & 0\\
0 & 0 & 0\\
0.1 & 0.1 & 0.1
\end{bmatrix},~~
\varphi(x)=sin(x),
\end{array}
\end{equation*}
the disturbance $f(t)$ is given by \eqref{eq2_10}.

Choose
$K_1=0.01\begin{bmatrix}
0 & 0\\
-1 & -1
\end{bmatrix}$, $K_2=\begin{bmatrix}
1.5 & -1.75\\
-1 & 1
\end{bmatrix}$
in control law \eqref{eq5_2}.
Additionally, choose $T_1=\begin{bmatrix}
1 & 2 & 1\\
1 & 2 & 1
\end{bmatrix}$ è $T_2=\begin{bmatrix}
1 & 2\\
1 & 2
\end{bmatrix}$.

Let $\Phi(\varepsilon,t)=diag\{\Phi_1(\varepsilon_1,t),\Phi_2(\varepsilon_2,t)\}$, where $\Phi_i$ is given in example 3: $\Phi_i(\varepsilon_i,t)=\frac{\overline{g}_i(t)e^{\varepsilon_i}+\underline{g}_i(t)}{e^{\varepsilon_i}+1}$, $i=1,2$. Therefore, $\Phi(\varepsilon,t)>0$ for all $ \varepsilon$ and $t$.
Then
$\frac{\partial \Phi_i(\varepsilon_i,t)}{\partial \varepsilon_i}=\frac{e^{\varepsilon_i}(\overline{g}_i(t)-\underline{g}_i(t))}{(e^{\varepsilon_i}+1)^2}>0$ since $\overline{g}_i(t)>\underline{g}_i(t)$.
Additionally, $\left(\frac{\partial \Phi_i(\varepsilon_i,t)}{\partial \varepsilon_i}\right)^{-1} \to + \infty$ at $\varepsilon \to +\infty$ and the smallest value of $\left(\frac{\partial \Phi_i(\varepsilon_i,t)}{\partial \varepsilon_i}\right)^{-1} \Big|_{\varepsilon=0} = \frac{4}{\overline{g}_i(t)-\underline{g}_i(t)}>0$.

According to \cite{Fridman10}, if LMI is feasible on the vertices of a polytope, then inside the polytope LMI also is feasible.
In our case for every fixed $\frac{\partial \Phi_i(\varepsilon_i,t)}{\partial \varepsilon_i}$ the matrix inequality \eqref{eq5_Th2} is linear.
However, the polytop is unbounded, since $\left(\frac{\partial \Phi_i(\varepsilon_i,t)}{\partial \varepsilon_i}\right)^{-1} \to + \infty$  at $\varepsilon \to +\infty$. The simulations with increasing $\left(\frac{\partial \Phi_i(\varepsilon_i,t)}{\partial \varepsilon_i}\right)^{-1}$ show that the eigenvalues of the matrix $P$ converge to some positive values. At the vertices 
$\frac{4}{\overline{g}_i(t)-\underline{g}_i(t)}$ the matrix inequality \eqref{eq5_Th2} holds too.

Choose the parameters of the function $\Phi(\varepsilon,t)$ in the form
\begin{equation}
\label{eq5_11}
\begin{array}{l} 
\overline{g}_1(t)=(g_{0}-g_{1})e^{-kt}+g_{1},
\\
\overline{g}_2(t)=(g_0-g_{2})\cos(kt)+g_{4},
\\
\underline{g}_1(t)=(g_0-g_{2})e^{-kt}+g_{3},
\\
\underline{g}_2(t)=\cos(kt)+g_{5},
\end{array}
\end{equation}
where $g_0=\sqrt{y^{\rm T}(0)y(0)}+0.01$, $g_{1}=0.1$, $g_{2}=2$, $g_{3}=-0.2$, $g_{4}=g_0-0.1$, $g_{5}=0.8$ and $k=0.5$.
Fig.~\ref{Fig3},~\ref{Fig3a} show the transients in $y_1(t)$ , $y_2(t)$ and $u(t)=col\{u_1(t),u_2(t)\}$ for $x(0)=col\{\frac{5}{3},\frac{2}{3},-1\}$.

\begin{figure}[h]
\begin{minipage}[h]{0.49\linewidth}
\center{\includegraphics[width=1\linewidth]{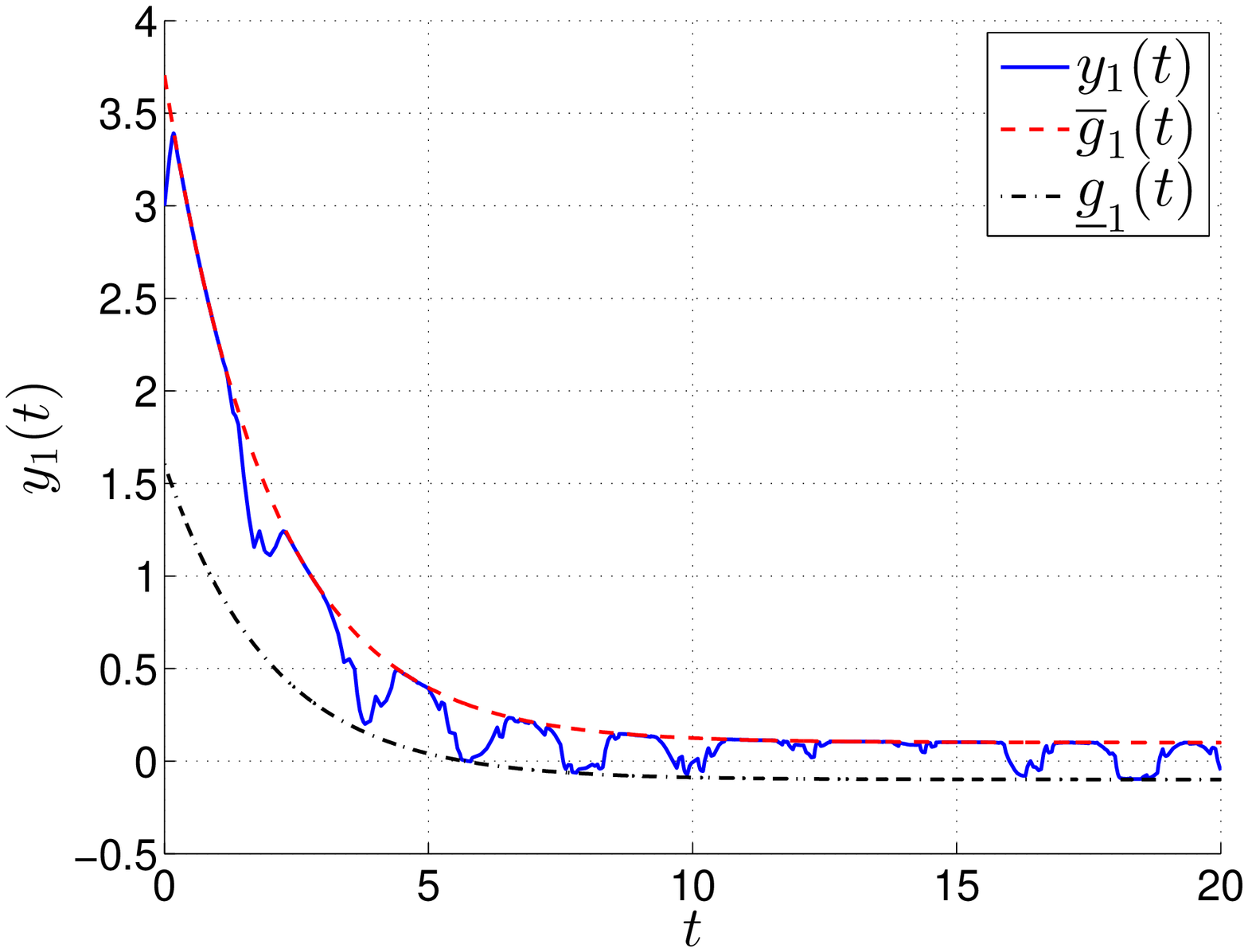}}
\end{minipage}
\hfill
\begin{minipage}[h]{0.49\linewidth}
\center{\includegraphics[width=1\linewidth]{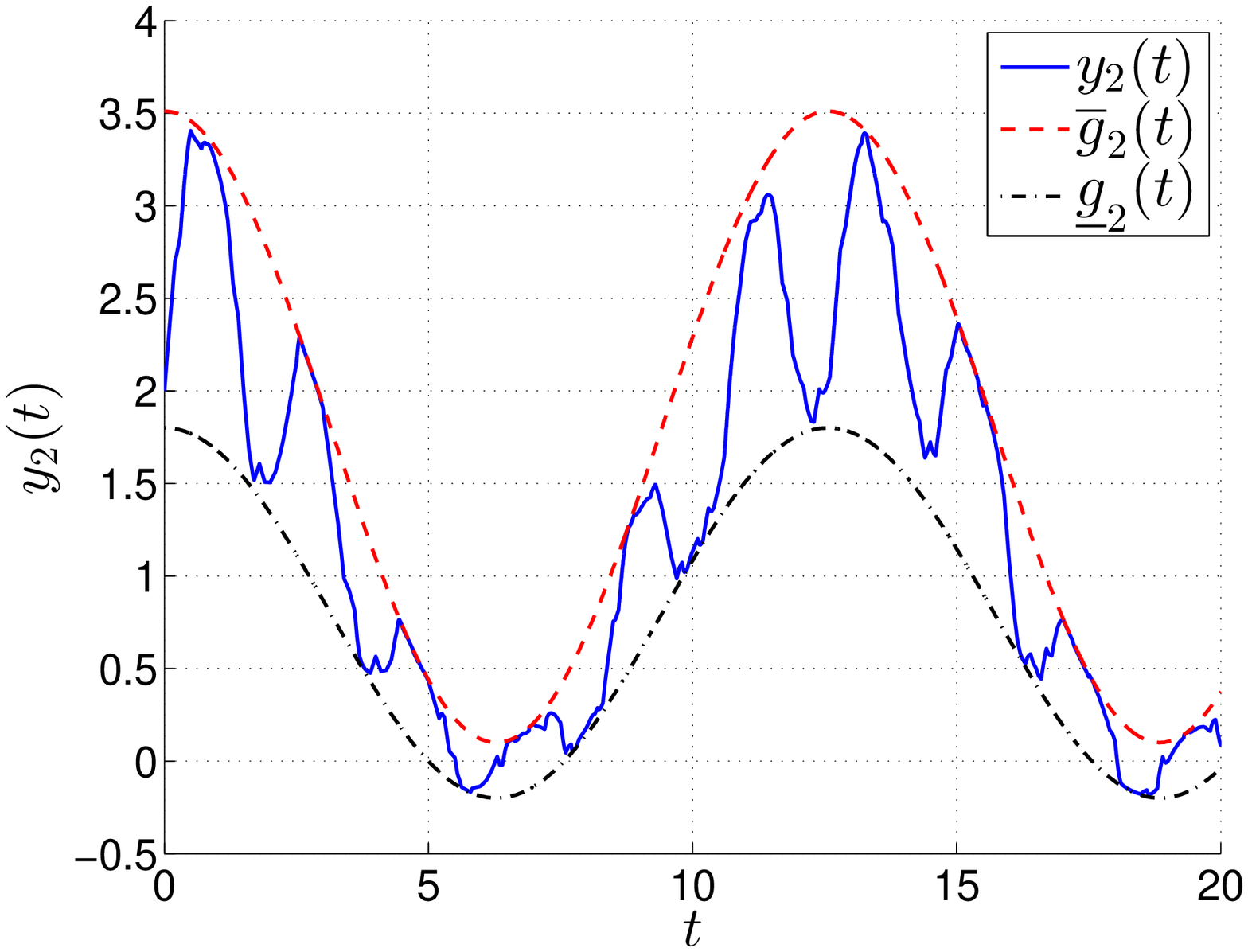}}
\end{minipage}
\caption{The transients in $y_1(t)$ and $y_2(t)$ for $\Phi(\varepsilon,t)$ with \eqref{eq5_11} and $x(0)=col\{\frac{5}{3},\frac{2}{3},-1\}$.}
\label{Fig3}
\end{figure}

\begin{figure}[h]
\center{\includegraphics[width=0.5\linewidth]{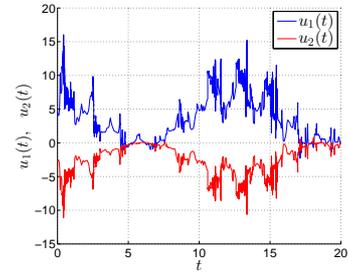}}
\caption{The transients in $u(t)=col\{u_1(t),u_2(t)\}$ for $\Phi(\varepsilon,t)$ with  \eqref{eq5_11} and $x(0)=col\{\frac{5}{3},\frac{2}{3},-1\}$.}
\label{Fig3a}
\end{figure}

Note that the control law $u=K_1y+K_2\varepsilon$ does not depend on the parameters of plant \eqref{eq2_4}. The simulations show the proposed control low is robust under unknown parameters of \eqref{eq2_4}. Thus, the closed-loop system remains stable for
$A=\begin{bmatrix}
0 & 1 & 0\\
0 & 0 & 1\\
a_1 & a_2 & a_3
\end{bmatrix}$ 
and 
$G=\begin{bmatrix}
0 & 0 & 0\\
0 & 0 & 0\\
g_{\varphi 1} & g_{\varphi 2} & g_{\varphi 3}
\end{bmatrix}$, 
where $a_1 \in [-5;0.1]$, $a_2 \in [-5;-2]$, $a_3 \in [-5;-3]$, $b \in [0.5; 10]$ è $g_{\varphi 1} \in [-3; 3]$, $g_{\varphi 2} \in [-3; 3]$ and $g_{\varphi 3} \in [-3; 3]$.

According to \eqref{eq2_2} and (a), the initial value $y(0)$ must belong to the sets
$\underline{g}_i(0)<y_i(0)<\overline{g}_i(0)$, $i=1,2$.
If the initial conditions have significant uncertainty, then the functions 
$\underline{g}_i(t)$ and $\overline{g}_i(t)$ can be specified with a margin at the initial time. For example, the functions $\underline{g}_i$ and $\overline{g}_i$ can be presented in the form
\begin{equation}
\label{eq5_12}
\begin{array}{l} 
\overline{g}_1(t)=(g_{0}-g_{1})e^{-kt}+g_{1}+g_6e^{-k_0t},
\\
\overline{g}_2(t)=(g_0-g_{2})\cos(kt)+g_{4}+g_6e^{-k_0t},
\\
\underline{g}_1(t)=(g_0-g_{2})e^{-kt}+g_{3}-g_6e^{-k_0t},
\\
\underline{g}_2(t)=\cos(kt)+g_5-g_6e^{-k_0t},
\end{array}
\end{equation}
where $g_6=3$ and $k_0=2$.
Fig.~\ref{Fig4} illustrates the plots of the output signals $y_1(t)$ and $y_2(t)$ with $x(0)=col\{\frac{10}{3},-\frac{5}{3},-1\}$.

\begin{figure}[h]
\begin{minipage}[h]{0.49\linewidth}
\center{\includegraphics[width=1\linewidth]{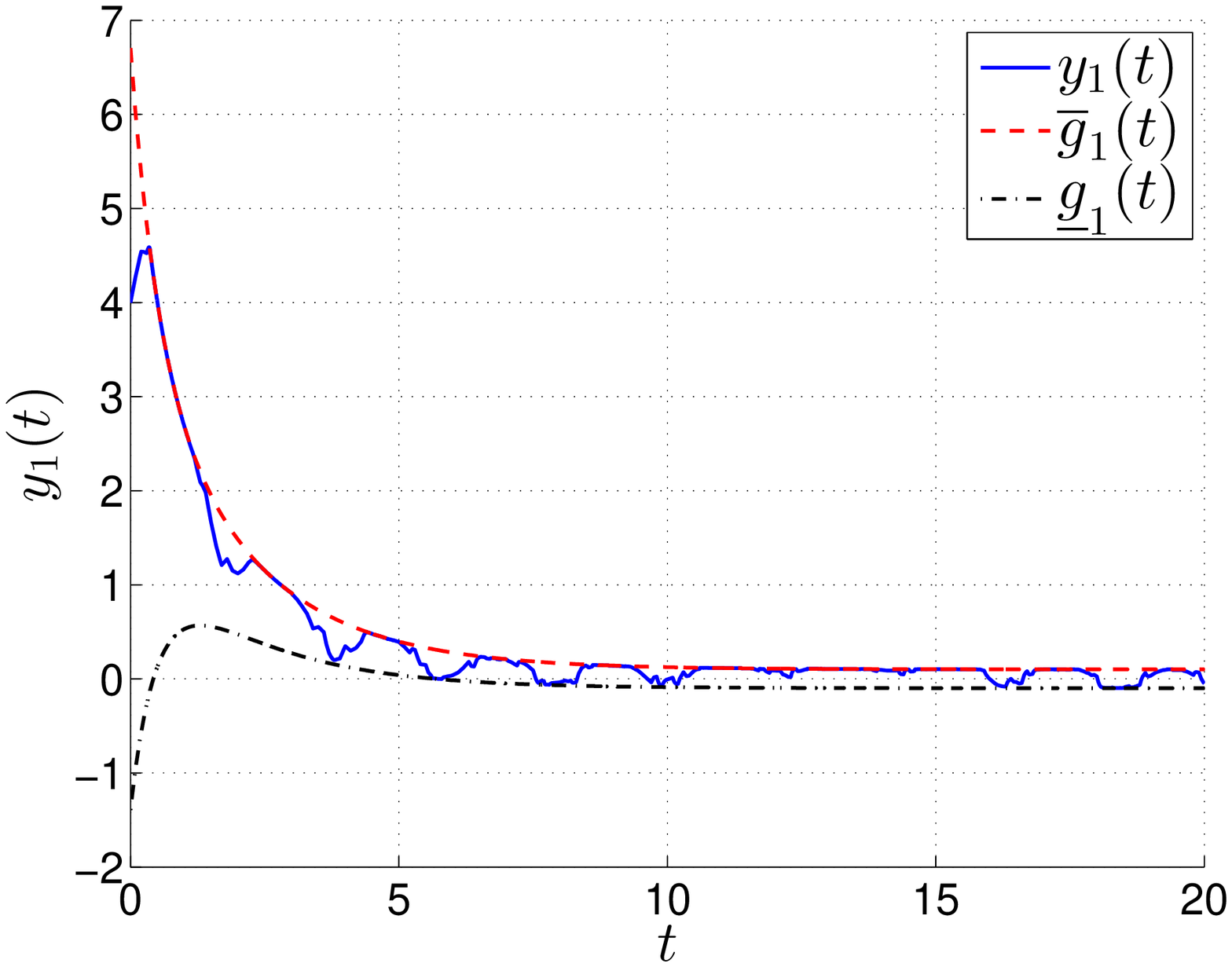}}
\end{minipage}
\hfill
\begin{minipage}[h]{0.49\linewidth}
\center{\includegraphics[width=1\linewidth]{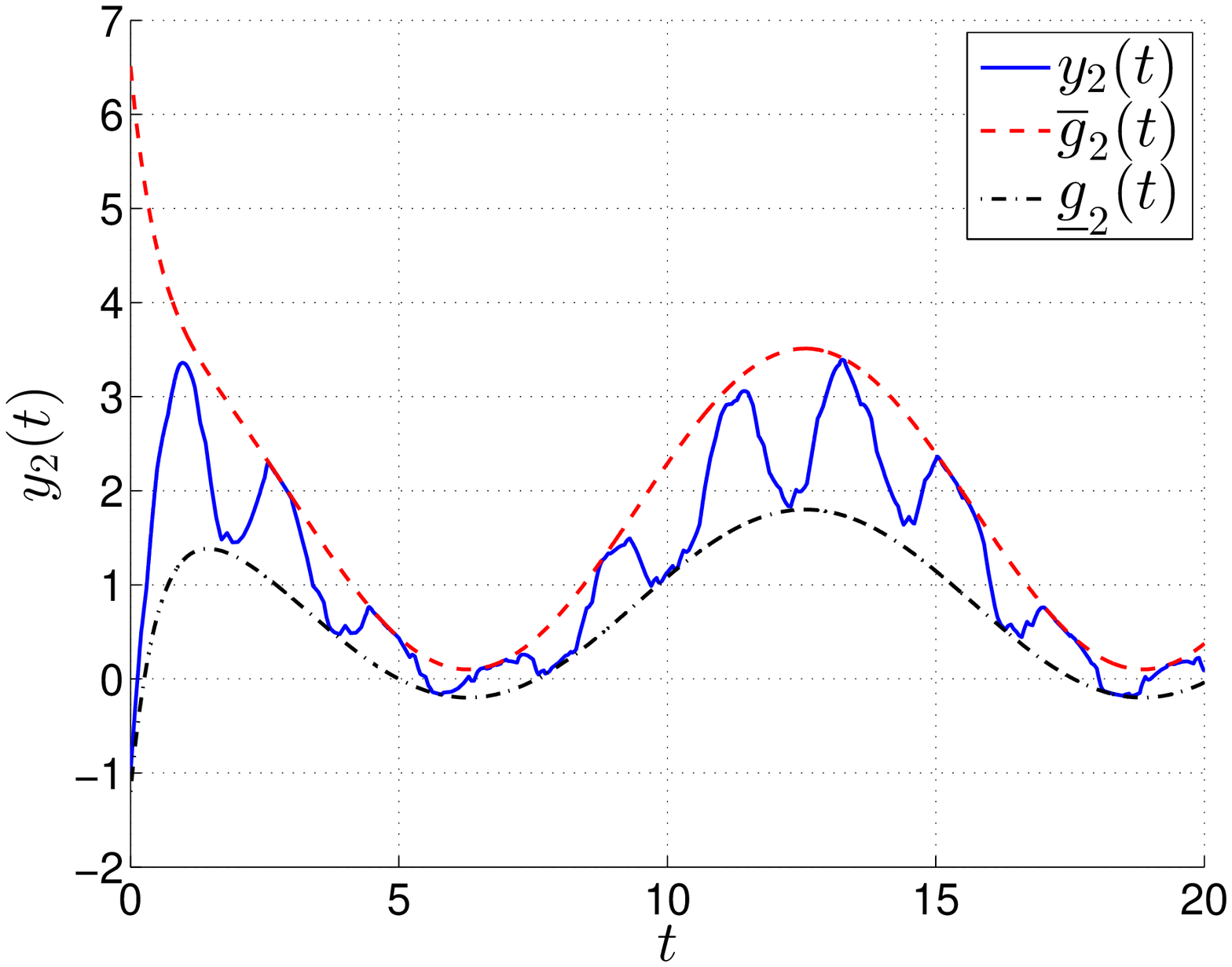}}
\end{minipage}
\caption{The transients in $y_1(t)$ and $y_2(t)$ for $\Phi(\varepsilon,t)$ with \eqref{eq5_12} and $x(0)=col\{\frac{10}{3},-\frac{5}{3},-1\}$.}
\label{Fig4}
\end{figure}

The simulations show that the transients in $y$ can be close to the boundaries of $\overline{g}(t)$ and $\underline{g}(t)$.
From $\varepsilon=\ln\frac{\underline{g}-y}{y-\overline{g}}$ it follows that the value of $|\varepsilon(t)|$ can take large values. Therefore, the computational load of the controller is increased. As a result, Matlab work is increased and sometimes Matlab gives an error in the calculations. To prevent this problem, it is recommended to select the parameters of the loop of $\varepsilon$ more than the parameters of the loop of $y$. Thereby, the transient time in $\varepsilon(t)$ is reduced in comparison with the transient time for $y(t)$. Moreover, it increases robustness w.r.t. uncertainty of plant parameters and the large value of the disturbance $f$. Let us demonstrate this fact. Rewrite the control law as $u=K_1y+\gamma K_2\varepsilon$, $\gamma>0$. Increasing $\gamma$, the transients in $y$ keep away from the boundaries $\overline{g}(t)$ and $\underline{g}(t)$ (see Fig.~\ref{Fig5}).

\begin{figure}[h]
\begin{minipage}[h]{0.49\linewidth}
\center{\includegraphics[width=1\linewidth]{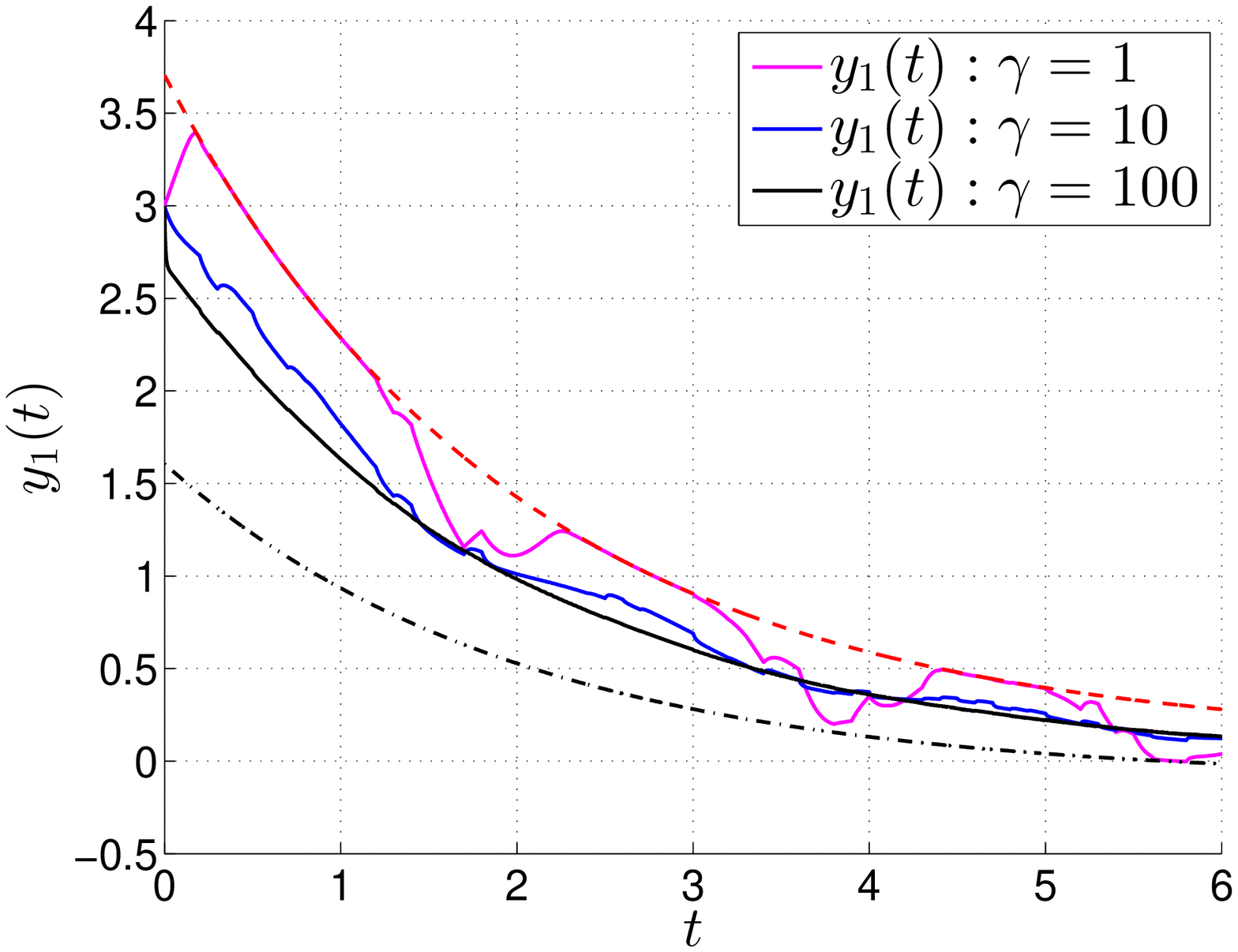}}
\end{minipage}
\hfill
\begin{minipage}[h]{0.49\linewidth}
\center{\includegraphics[width=1\linewidth]{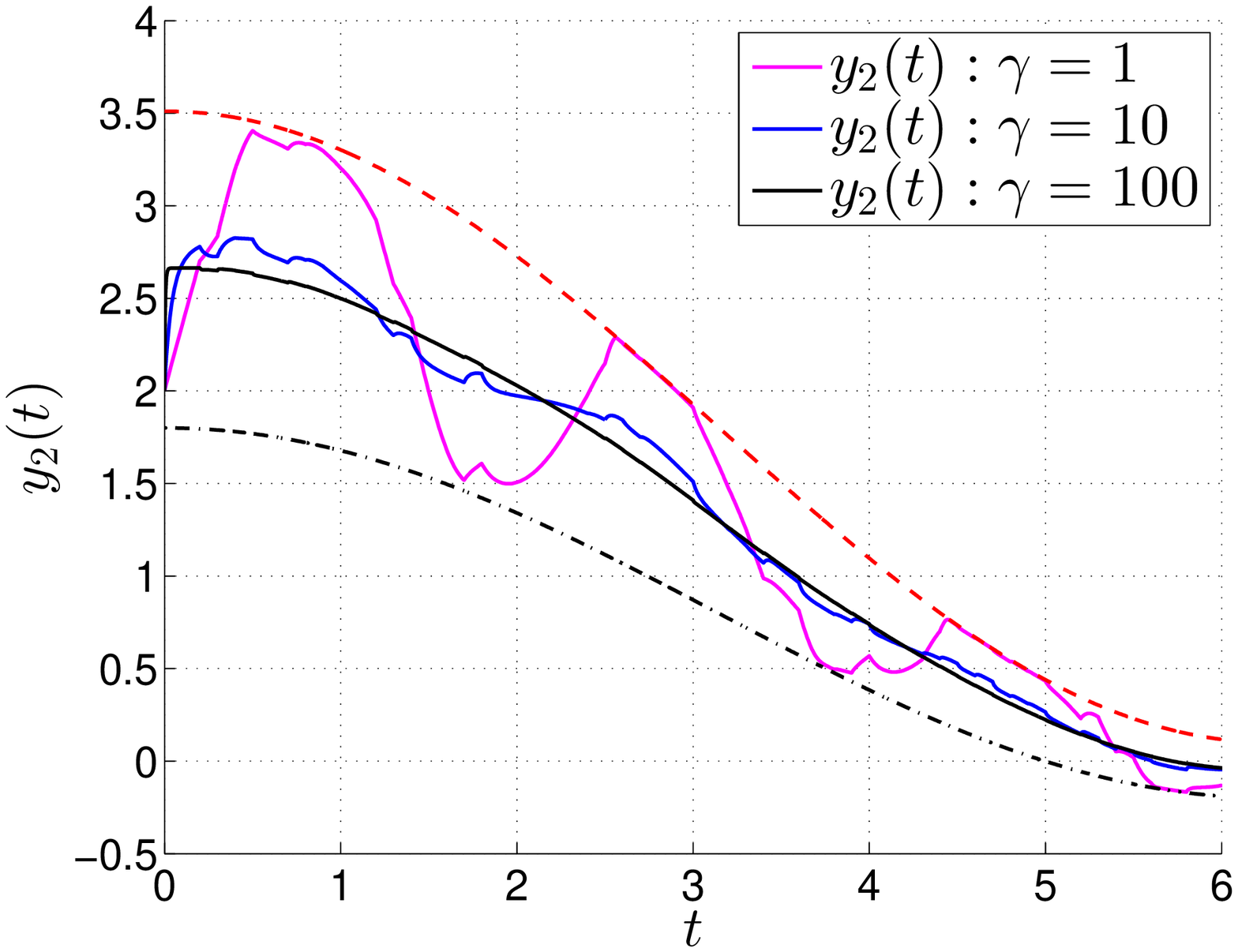}}
\end{minipage}
\caption{The transients in $y_1(t)$ and $y_2(t)$ for $\gamma=1$, $\gamma=10$ and $\gamma=100$ and $x(0)=col\{1~1~0\}$.}
\label{Fig5}
\end{figure}

\section{Output feedback control for systems with arbitrary relative degree}
\label{Sec6}

The results of sections \ref{Sec4} and \ref{Sec5} are valid for systems with relative degree that not exceeding one. Let us consider systems with arbitrary relative degree and they be described by the following equations
\begin{equation}
\label{eq6_2}
\begin{array}{l} 
\dot{x}=Ax+Bu+Df,
\\
y=Lx.
\end{array}
\end{equation}
Here $x \in \mathbb R^n$ is the unmeasured state vector, $u \in \mathbb R$ and $y \in \mathbb R$ are measured signals, the disturbance signal $f \in \mathbb R$ is bounded along with the $n-1$ derivatives. The matrices $A \in \mathbb R^{n \times n}$, $B \in \mathbb R^{n}$ and $L \in \mathbb R^{1 \times n}$ are known, the matrix $D \in \mathbb R^{n}$ is unknown and the matrix $A$ is Hurwitz. The pair $(A,B)$ is controllable and the pair $(L,A)$ is observable.

Transform equation \eqref{eq6_2} to the form
\begin{equation}
\label{eq6_3}
\begin{array}{l} 
y(t)=\frac{R(p)}{Q(p)}u(t)+\frac{D(p)}{Q(p)}f(t)+\epsilon(t).
\end{array}
\end{equation}
The linear differential operators $Q(p)$, $R(p)$ and $D(p)$ are obtained from transition \eqref{eq6_2} to \eqref{eq6_3}, i.e. $Q(p)=det(pI-A)$, $R(p)=L(pI-A)^*B$, $D(p)=L(pI-A)^*D$, $\epsilon(t)=L(pI-A)^*x(0)$ is the exponentially decaying function, $p=\frac{d}{dt}$. 
Substituting $y(t)$ from \eqref{eq6_3} into \eqref{eq2_3}, we get
\begin{equation}
\label{eq6_4}
\begin{array}{l}
\dot{\varepsilon}(t)=\Big(\frac{\partial\Phi(\varepsilon,t)}{\partial \varepsilon}\Big)^{-1}
\Big(\frac{pR(p)}{Q(p)}u(t)
\\
~~~~~+\frac{pD(p)}{Q(p)}f(t)+\dot{\epsilon}(t)-\frac{\partial\Phi(\varepsilon,t)}{\partial t}
\Big).
\end{array}
\end{equation}

Introduce the control law $u(t)=-\frac{Q(p)}{pR(p)}K\varepsilon(t)$, where $K>0 $ and rewrite equation \eqref{eq6_4} in the form $\dot{\varepsilon}=\Big(\frac{\partial\Phi(\varepsilon,t)}{\partial \varepsilon}\Big)^{-1}
\times \\ \Big(-K\varepsilon+\frac{pD(p)}{Q(p)}f(t)+\dot{\epsilon}(t)-\frac{\partial\Phi(\varepsilon,t)}{\partial t}\Big)$.
 This system is ISS when $\frac{\partial\Phi(\varepsilon,t)}{\partial \varepsilon}>0$. However, such control law is not implementable, because $\rho-1$ derivatives of the signal $\varepsilon$ are required for measurement, where $\rho=\deg Q(p)-\deg R(p)$ is the relative degree of \eqref{eq6_2}. Therefore, introduce the control law in the form
\begin{equation}
\label{eq6_5}
\begin{array}{l} 
u=-\frac{Q(p)}{R(p)[p(\mu p+1)^{\rho-1}+a\mu]}K\varepsilon.
\end{array}
\end{equation}
Here, the sufficiently small number $\mu>0$ and the coefficient $a>0$ are chosen such that the polynomial $\lambda(\mu \lambda+1)^{\rho-1}+a\mu$ is Hurwitz, $\lambda$ is a complex variable. Given \eqref{eq6_5}, rewrite \eqref{eq6_3} and \eqref{eq6_4} as follows
\begin{align}
& y(t)=-K\frac{1}{p(\mu p+1)^{\rho-1}+a\mu} 
\varepsilon(t)+\frac{D(p)}{Q(p)}f(t)+\epsilon(t),
\label{eq6_6} \\
& \dot{\varepsilon}(t)=\left(\frac{\partial\Phi(\varepsilon,t)}{\partial \varepsilon}\right)^{-1} \times \nonumber
\\
&\left(-K\varepsilon(t)-K\frac{p-p(\mu p+1)^{\rho-1}-a\mu}{p(\mu p+1)^{\rho-1}+a\mu}\varepsilon(t)
+\phi(t)
\right),
\label{eq6_60}
\end{align}
where $\phi(t)=\frac{pD(p)}{Q(p)}f(t)-\frac{\partial\Phi(\varepsilon,t)}{\partial t}+\dot{\epsilon}(t)$ is a bounded function.

\begin{theorem}
\label{Th5}
Let conditions (a)-(d) hold for transformation \eqref{eq2_2} and
$\frac{\partial\Phi(\varepsilon,t)}{\partial \varepsilon}>0$ for all $\varepsilon$ and $t$.
Given $\alpha>0$ and $K>0$ there exist $\beta>0$, $\mu_0>0$ and $a$ such that for $\mu<\mu_0$ the polynomial $\lambda(\mu \lambda+1)^{\rho-1}+a\mu$ is Hurwitz and LMI \eqref{LMI_Th02} holds. Then control law \eqref{eq6_5} ensures goal \eqref{eq2_20}.
\end{theorem}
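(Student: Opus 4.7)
My plan is to reduce this result to the template of Theorem~\ref{Th02} by viewing the extra term introduced by the realizable filter as a small singular perturbation. The key observation is that as $\mu\to 0$ one has $p(\mu p+1)^{\rho-1}+a\mu \to p$, so the control law \eqref{eq6_5} formally approaches the ideal (non-realizable) law $u=-\frac{Q(p)}{pR(p)}K\varepsilon$ whose closed loop is already known to be ISS with the Lyapunov function $V=\tfrac12\varepsilon^{2}$ and the same LMI \eqref{LMI_Th02}.

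The first step would be to write the filter $\frac{1}{p(\mu p+1)^{\rho-1}+a\mu}$ in state-space form, introducing an auxiliary state $\eta\in\mathbb{R}^{\rho}$ (of the same order as the filter) with $\dot{\eta}=A_\mu \eta+B_\mu \varepsilon$ and corresponding output appearing in \eqref{eq6_60}. Since $\lambda(\mu\lambda+1)^{\rho-1}+a\mu$ is Hurwitz by hypothesis, $A_\mu$ is Hurwitz for $\mu<\mu_0$. Expanding $(\mu p+1)^{\rho-1}=1+(\rho-1)\mu p+O(\mu^{2}p^{2})$ shows that the transfer function $H_\mu(p):=K\,\frac{p-p(\mu p+1)^{\rho-1}-a\mu}{p(\mu p+1)^{\rho-1}+a\mu}$ is of order $O(\mu)$ in an appropriate sense, so its state-space realization has $B$-matrix proportional to $\mu$. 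The composite $(\varepsilon,\eta)$ system is then a standard two-time-scale singular perturbation with the slow equation $\dot{\varepsilon}=(\partial\Phi/\partial\varepsilon)^{-1}(-K\varepsilon+\phi+O(\mu)\cdot\eta)$ and a fast stable boundary layer for $\eta$.

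Next I would run the same Lyapunov calculation as in Theorem~\ref{Th02}: with $V=\tfrac12\varepsilon^{2}$ the nominal part yields $-(K-\alpha)\varepsilon^{2}+\varepsilon\phi-\beta\phi^{2}\le 0$ whenever \eqref{LMI_Th02} holds. A composite Lyapunov function of the form $W=V+\kappa\,\eta^{\top}P_\mu\eta$, with $P_\mu$ solving the Lyapunov equation for $A_\mu$, absorbs the filter state. The cross-term generated by the $O(\mu)$ coupling from $\eta$ into $\dot{\varepsilon}$ can be dominated by the negative-definite parts of $\dot{V}$ and $\dot{(\eta^{\top}P_\mu\eta)}$ provided $\mu$ (equivalently $\mu_0$) is chosen small enough; this is exactly the standard singular-perturbation estimate and it lets me absorb the extra dynamics into an enlarged $\beta$, keeping \eqref{LMI_Th02} satisfied. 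The outcome is the ISS inequality $\dot{W}+2\alpha W(\partial\Phi/\partial\varepsilon)^{-1}-\tilde\beta\,\phi^{2}(\partial\Phi/\partial\varepsilon)^{-1}\le 0$, which, since $(\partial\Phi/\partial\varepsilon)^{-1}>0$ and $\phi$ is bounded, gives boundedness of $\varepsilon$ and of the filter state $\eta$.

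Finally, boundedness of $\varepsilon$ and $\eta$ implies boundedness of $u$ via \eqref{eq6_5}; then boundedness of $x$ follows from the assumption that $A$ is Hurwitz in \eqref{eq6_2} together with the boundedness of $u$ and $f$. Since the solutions of both \eqref{eq6_2} and the $\varepsilon$ equation \eqref{eq6_60} are bounded, Theorem~\ref{Th1} gives $y(t)\in\mathcal{Y}_\alpha\subset\mathcal{Y}$, which is goal \eqref{eq2_20}. The main obstacle I expect is the rigorous choice of $\mu_0$: writing down an explicit bound requires controlling both the norm of $P_\mu$ and the $O(\mu)$ coupling gain uniformly in the filter coefficients, i.e.\ verifying that the standard singular-perturbation stability margin carries through despite the state-dependent gain $(\partial\Phi/\partial\varepsilon)^{-1}$, which can blow up and must be dominated by a uniform lower bound coming from condition (c).
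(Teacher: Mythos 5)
Your skeleton coincides with the paper's: both arguments reduce the closed loop to the ideal equation $\dot{\varepsilon}=\left(\frac{\partial\Phi}{\partial\varepsilon}\right)^{-1}(-K\varepsilon+\phi)$, run the identical Lyapunov computation with $V=\tfrac12\varepsilon^{2}$ leading to $-(K-\alpha)\varepsilon^{2}+\varepsilon\phi-\beta\phi^{2}\le 0$ certified by LMI \eqref{LMI_Th02}, and finish by deducing boundedness of $y$ and $u$ and invoking Theorem~\ref{Th1}. The only genuine divergence is the passage from $\mu=0$ to $\mu>0$: the paper proves boundedness of the reduced solution $\overline{\varepsilon}$ of \eqref{eq6_60a} and then cites Theorem~2.2 of Vasilieva--Butuzov to get $|\overline{\varepsilon}(t)-\varepsilon(t)|<O(\mu)$, whereas you realize the filter in state space and build a composite Lyapunov function $W=V+\kappa\,\eta^{\rm T}P_\mu\eta$. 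Your route is more self-contained and would in principle produce an explicit $\mu_0$; the paper's is shorter but leaves unverified the hypotheses of the cited perturbation theorem (validity on an infinite interval, uniformity with respect to the state-dependent factor $\left(\frac{\partial\Phi}{\partial\varepsilon}\right)^{-1}$).

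There is, however, one step in your plan that does not go through as stated, and it is exactly the difficulty the paper hides inside the citation. The polynomial $\lambda(\mu\lambda+1)^{\rho-1}+a\mu$ has $\rho-1$ fast roots near $-1/\mu$ but also one slow root near $-a\mu$, so the filter $\frac{1}{p(\mu p+1)^{\rho-1}+a\mu}$ has DC gain $\frac{1}{a\mu}$ and its state $\eta$ can be of order $1/\mu$. Hence, although the input matrix of your realization of $H_\mu$ carries a factor $\mu$, the product ``$O(\mu)\cdot\eta$'' is $O(1)$, not $O(\mu)$ (equivalently, $\frac{p-p(\mu p+1)^{\rho-1}-a\mu}{p(\mu p+1)^{\rho-1}+a\mu}$ equals $-1$ at $p=0$, so the perturbation does not vanish uniformly as $\mu\to 0$). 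Your composite Lyapunov argument therefore cannot treat all of $\eta$ as a fast, exponentially contracting boundary layer; the slow filter mode must be isolated and analyzed separately before the cross-terms can be dominated. Note also that concluding boundedness of $u$ from \eqref{eq6_5} requires $R(p)$ to be Hurwitz (minimum phase); neither your plan nor the paper states this assumption, and your alternative of obtaining $x$ bounded from the Hurwitz property of $A$ does not remove the need for it.
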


\begin{proof}
Expression \eqref{eq6_60} is the differential equation with regular perturbation, where $\mu$ is the small parameter. According to \cite{Vasilieva73,Bauer15}, let us study \eqref{eq6_60} for $\mu=0$. To this end, rewrite  \eqref{eq6_60} in the form
\begin{align}
\dot{\overline\varepsilon}=\left(\frac{\partial\Phi(\overline\varepsilon,t)}{\partial \overline\varepsilon}\right)^{-1}
\left(-K\overline\varepsilon
+\phi
\right).
\label{eq6_60a}
\end{align}
For the ISS analysis of \eqref{eq6_60a} consider Lyapunov function
$V=0.5\overline\varepsilon^2$.
Verify the relation $\dot{V}+2\alpha V\left(\frac{\partial\Phi(\overline\varepsilon,t)}{\partial \overline\varepsilon}\right)^{-1}-\beta\left(\frac{\partial\Phi(\overline\varepsilon,t)}{\partial \overline\varepsilon}\right)^{-1}\phi^2<0$.
Substituting $V$ and \eqref{eq6_60a} into the last inequality, we get $-(K-\alpha)\overline\varepsilon^2+\overline\varepsilon \phi-\beta \phi^2 < 0$. If LMI \eqref{LMI_Th02} is feasible, then the last inequality holds.
Therefore, the solution of \eqref{eq6_60a} is bounded. Therefore, according to Theorem 2.2 from \cite{Vasilieva73}, \cite{Bauer15}, there exists $\mu_0$ such that for $\mu<\mu_0$ the condition $|\overline{\varepsilon}(t)-\varepsilon(t)|<O(\mu)$ holds, where $\lim_{\mu \to 0}O(\mu)=0$.
As a result, for $\mu<\mu_0$ the solution of \eqref{eq6_60} is also bounded.
Then, due to the boundedness of $f(t)$, $\varepsilon(t)$ and Hurwitz of $Q(\lambda)$ and $\lambda(\mu \lambda+1)^{\rho-1}+a\mu$, the signal $y(t)$ is bounded from \eqref{eq6_6}. Therefore, the control law $u$ is bounded from \eqref{eq6_5}. According to Theorem \ref{Th1}, goal \eqref{eq2_2} holds. Theorem \ref{Th5} is proved.
\end{proof}

\textit{Example 7.} Let the parameters of \eqref{eq6_2} be defined as
\begin{equation}
\label{ex6_1}
\begin{array}{l} 
A=\begin{bmatrix}
0 & 1 & 0\\
0 & 0 & 1\\
-1 & -3 & -3
\end{bmatrix},~~~
B=\begin{bmatrix}
0\\
0\\
1
\end{bmatrix},
\\
D=[1~1~1]^{\rm T},~~~
L=[1~0~0].
\end{array}
\end{equation}
The disturbance $f(t)$ is given by \eqref{eq2_10}. The relative degree is equal to 3.

Let $K=3$, $\mu=0.01$ and $a=0.1$ in \eqref{eq5_2}. According to \eqref{ex6_1} $Q(p)=(p+1)^3$ and $R(p)=1$. Then, control law \eqref{eq6_5} is rewritten as
\begin{equation}
\label{eq6_50}
\begin{array}{l} 
u=-\frac{3(p+1)^3}{p(0.01 p+1)^{2}+10^{-3}}\varepsilon.
\end{array}
\end{equation}
Choose $\Phi$ as in Example 3, where $\Phi(\varepsilon,t)=\frac{\overline{g}(t)e^{\varepsilon}+\underline{g}(t)}{e^{\varepsilon}+1}$ and
\begin{equation*}
\label{eq6_11}
\begin{array}{l} 
\overline{g}(t)=
\begin{cases}
   2\cos(t)+0.2, & 0 \leq t \leq 2\pi,\\
   2.2, & t>2\pi;
\end{cases}
\\
\underline{g}(t)=
\begin{cases}
   2\cos(t)-0.2, & 0 \leq t \leq 2\pi,\\
   1.8, & t>2\pi.
\end{cases}
\end{array}
\end{equation*}
The simulations show that control law \eqref{eq6_50} provides goal \eqref{eq2_10}. Moreover, control law \eqref{eq6_50} is robust w.r.t. parametric uncertainties. Fig.~\ref{Fig44} shows the transients in $y(t)$ for $x(0)=col\{2,1,1\}$ and non-Hurwitz matrix
$A=\begin{bmatrix}
0 & 1 & 0\\
0 & 0 & 1\\
1 & 3 & 3
\end{bmatrix}$.

\begin{figure}[h]
\center{\includegraphics[width=0.6\linewidth]{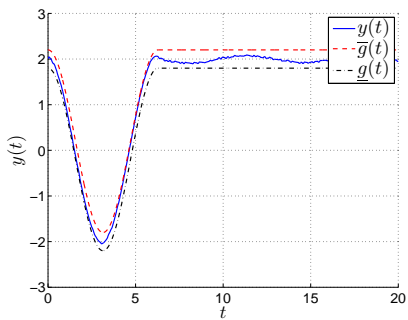}}
\caption{The transients in $y(t)$ for $x(0)=col\{2,1,1\}$.}
\label{Fig44}
\end{figure}

\section{Conclusion}
\label{Sec7}
The method for control of dynamical systems based on a special change of coordinates is proposed. 
According to this method, the initial control problem with the given restriction on an output variable leads to the problem of the input-to-state stability analysis of a new extended system without restrictions. 
As a result, a plant output signal belongs to a given set at any time in the closed-loop system. 
The examples of change of coordinates that can be used for design algorithms are presented. 
Based on the proposed method, the new control laws for linear plants, systems with sector nonlinearity and systems with an arbitrary relative degree are designed. 

The simulations confirm theoretical results. 
The proposed control laws illustrate the effectiveness of the proposed method in the presence of parametric uncertainty and external disturbances. 
Since the plant initial conditions must belong to a given restrictions, in examples the functions specifying restrictions at an initial time are proposed. 
Also, the simulations show that the control law performance can be improved if the design parameters of the loop using a new variable more than the design parameters of the loop using the output signal.


\begin{thebibliography}{99}     

\bibitem[Afanasiev et al.(1996)]{Afanasiev96}
Afanasiev, V.N., Kolmanovskii, V., and Nosov, V.R.(1996). 
\textit{Mathematical Theory of Control Systems Design}. Springer.


\bibitem[Atassi and Khalil(1999)]{Atasi99}
Atassi, A.N. and Khalil, H.K.(1999).
A separation principle for the stabilization of class of nonlinear systems, \textit{IEEE Transaction on Automatic Control}, volume 44, 9, 1672-1687.

\bibitem[Bauer et al.(2015)]{Bauer15}
Bauer, S.M., Filippov, S.B., Smirnov, A.L., Tovstik, P.E., and  Vaillancourt, R.(2015). Regular Perturbation of Ordinary Differential Equations, \textit{Asymptotic methods in mechanics of solids}, 89-153.

\bibitem[Bechlioulis and Rovithakis(2008)]{Bechlioulis08}
Bechlioulis, C.P. and Rovithakis, G.A.
Robust Adaptive Control of Feedback Linearizable MIMO Nonlinear Systems With Prescribed Performance. \textit{IEEE Transaction on Automatic Control}, volume 53, 9, 2090-2099.

\bibitem[Bobtsov(2003)]{Bobtsov03}
Bobtsov, A.A.(2003).
A Robust Control Algorithm for an Uncertain Object without Measurements of the Derivatives of the Adjusted Variable, \textit{Automation and Remote Control}, volume 64, 8, 1275-1286.

\bibitem[Bobtsov(2005)]{Bobtsov05}
Bobtsov, A.(2005).
A note to output feedback adaptive control for uncertain system with static nonlinearity, \textit{Automatica}, volume 41, 12, 2177-2180.

\bibitem[Bouyahiaoui et al.(2005)]{Bia05}
Bouyahiaoui, C., Grigoriev, L.I., Laaouad F., and Khelassi, F.A.(2005).
Optimal fuzzy control to reduce energy consumption in distillation columns, \textit{Automation and Remote Control}, 2005, volume 66, 2, 200-208.

\bibitem[Fradkov et al.(1999)]{Fradkov00}
Fradkov, A.L., Miroshnik, I.V., and Nikiforov, V.O.(1999).
\textit{Nonlinear and Adaptive Control of Complex Systems}. Kluwer Academic Publishers, Dordrecht.

\bibitem[Fradkov and Andrievsky(2011)]{Fradkov11}
Fradkov, A.L. and Andrievsky B.(2011).
Passification-based robust flight control design. \textit{Automatica}, volume 47, 2743-2748.

\bibitem[Fridman(2010)]{Fridman10}
Fridman, E. A refined input delay approach to sampled-data control. \textit{Automatica}, volume 46, 421-427.

\bibitem[Fridman(2014)]{Fridman14}
Fridman, E. (2014). \textit{Introduction to Time-Delay Systems. Analysis and Control}. Birkhauser.

\bibitem[Golnaraghi and Kuo(2017)]{Golnaraghi17}
Golnaraghi, F. and Kuo, B.C.(2017).
\textit{Automatic Control Systems}. McGraw-Hill Education.

\bibitem[Ioannou and Sun(1995)]{Ioannou95}
Ioannou, P.A. and Sun, J. (1995).
\textit{Robust adaptive control}. Prentice-Hall, NJ.

\bibitem[Kundur(1994)]{Pavlov01}
Kundur, P.(1994).
\textit{Power System Stability and Control}. McGraw-Hill, New York.

\bibitem[Kuo(1975)]{Kuo75}
Kuo, B.C.(1975).
\textit{Automatic control systems}. Prentice-Hall, Inc., New Jersey.

\bibitem[Miller and Davison(1991)]{Miller91}
Miller, D.E. and Davison, E.J.
An Adaptive Controller Which Provides an Arbitrarily Good Transient and
Steady-State Response. \textit{IEEE Transaction on Automatic Control}, volume 36, 1, 68-81.

\bibitem[Polyak et al.(2011)]{Polyak14}
Khlebnikov, M.V., Polyak B.T., and Kuntsevich, V.M.(2011)
Optimization of linear systems subject to bounded exogenous disturbances: The invariant ellipsoid technique. \textit{Automation and Remote Control}, 2011, 72:11, 2227-2275.

\bibitem[Polyak et al.(2015)]{Polyak15}
Polyak, B.T., Tremba, A.A., Khlebnikov, M.V., Shcherbakov, P.S., and Smirnov G.V. (2015).
Large deviations in linear control systems with nonzero initial conditions, \textit{Automation and Remote Control}, volume 76, 6, 957-976.

\bibitem[Ruderman et al.(2008)]{Ruderman08}
Ruderman, M., Krettek, J., Hoffmann, F., and Bertram, T.
Optimal State Space Control of DC Motor. \textit{Proc. of the 17th World Congress
The International Federation of Automatic Control, Seoul, Korea}, 5796-5801.

\bibitem[Tao(2003)]{Tao03}
Tao, G. \textit{Adaptive Control Design and Analysis}. John Wiley \& Sons.

\bibitem[Tsykunov(2007)]{Tsykunov07}
Tsykunov, A.M.(2007).
Robust control algorithms with compensation of bounded perturbations, \textit{Automation and Remote Control}, volume 68, 7, 1213-1224.

\bibitem[Whidborne and Amar(2011)]{Whidborne11}
Whidborne, J.F. and Amar, N.
Computing the Maximum Transient Energy Growth. \textit{BIT Numerical Math.},volume 51, 2, 447-557.

\bibitem[Vasilieva and Butuzov(1973)]{Vasilieva73}
Vasilieva, A.B. and Butuzov, V.F.(1973).
\textit{Asymptotic expansions of solutions of singularly perturbed equations}, Nauka, Moscow (in Russian).


\end{thebibliography}
\end{document}